\newtheorem{thm}{Theorem}
\newtheorem{lem}{Lemma}
\newtheorem{defi}{Definition}
\newtheorem*{theorem*}{Theorem}
\newtheorem*{proposition*}{Proposition}
\def\>{\rangle}
\def\<{\langle}
\begin{document}


\title{Accumulation of Device-Independent Quantum Randomness against Time-Ordered No-Signalling Adversaries}


\author{Ravishankar Ramanathan}
\email{ravi@cs.hku.hk}
\affiliation{School of Computing and Data Science, The University of Hong Kong, Pokfulam Road, Hong Kong}
\author{Yuan Liu}
\affiliation{School of Computing and Data Science, The University of Hong Kong, Pokfulam Road, Hong Kong}
\author{Yutian Wu}
\affiliation{School of Computing and Data Science, The University of Hong Kong, Pokfulam Road, Hong Kong}


\date{\today}


\onecolumngrid


\begin{abstract}
The question of security of practical device-independent protocols against no-signalling adversaries, the ultimate form of cryptographic security, has remained open. A key ingredient is to identify how the entropy in the raw outputs of a Bell test accumulates over $n$ sequential runs (termed time-ordered no-signalling) against a no-signalling adversary. Previous numerical and analytical investigations for small $n$ ($\leq 5$) had suggested that the min-entropy might not accumulate linearly in contrast to the case of quantum adversaries. Here we point out that despite the findings for small $n$, the min-entropy does in fact accumulate linearly for large $n$. We illustrate the difference in randomness accumulation against quantum and no-signalling adversaries with the paradigmatic example of the Chained Bell test for which we analytically derive the min-entropy. Finally, we illustrate the power of the no-signalling adversary by providing a class of attacks that allow an eavesdropper to perfectly guess the outputs of one player in general bipartite Pseudotelepathy games. 
\end{abstract}

\maketitle

\textit{Introduction.-}
The outcomes of local measurements on entangled quantum systems can be certified to be genuinely random through the violation of a Bell Inequality. The foundational phenomenon of Bell non-locality has given rise to the possibility of Device-Independent (DI) quantum cryptography. In the DI framework, the honest users do not need to trust even the very devices used in the cryptographic protocol. They can instead directly verify the correctness and security by means of simple statistical tests of the device's input-output behavior, specifically checking for the violation of a Bell inequality. This violation would certify that the outputs of the device could not have been pre-determined, and as such could not be perfectly guessed by an adversary, allowing for the possibility of securely extracting randomness or secret key.

Remarkably, it has been shown that in principle, security proofs for protocols of quantum key distribution (QKD) and quantum random number generation (QRNG) can be independent even of the fact that the adversary is limited by quantum theory. All that is required instead is the absence of any hidden information flow between the devices, a condition that can be enforced either by shielding the devices or by means of spacelike separation. Such security against a \textit{no-signalling} (NS) adversary would mean the achievement of the gold standard of cryptography - DI information-theoretic security based only on the principle of relativistic causality. 

However, all known schemes for such causality-based DI-QRNG and DI-QKD suffer from drawbacks: some protocols are intolerant of noise \cite{BHK05, BCK12} and hence more proof-of-principle, while the ones that do offer noise-tolerance (and hence require the use of privacy amplification as a crucial step) require multiple devices (that are either shielded or spacelike separated) with the number of devices growing with the level of security  \cite{Mas09, MRC+14, CSW16, MPA11}, and hence are also impractical. It is of course highly desirable to develop practical protocols, where two parties each use a single device carrying out measurements consecutively in a time-ordered sequence. Such a scenario is termed \textit{time-ordered no-signalling} ($\mathcal{TONS}$) whereby past measurements on a device could influence its future outcomes. In this context, several attacks and no-go results have been shown against the key privacy amplification step in the protocol \cite{AFTS12, SW16, Sal15, HRW13, SKG+21, HR19}, while the status of accumulation of randomness in the outcomes (the raw output string before privacy amplification) against such adversaries is also unclear \cite{BPA18, SW16}. Here we clarify the latter problem, namely the accumulation of randomness in the raw outcomes of the devices against $\mathcal{TONS}$ adversaries.

In pioneering work in \cite{MRC+14}, the authors considered a causal scenario termed as $\mathcal{FULL-NS}$ in which $2n$ players hold $2n$ devices that satisfy all possible pairwise no-signalling conditions (i.e., this corresponds to an impractical situation with a large number of shielded or spacelike-separated devices). For behaviors satisfying all the no-signalling constraints in this scenario, the authors showed that the guessing probability goes down exponentially (the min-entropy accumulates linearly), essentially the optimal guessing strategy for Eve corresponds to an i.i.d. (identical and independently distributed) strategy for all devices. In a comprehensive study in \cite{BPA18}, the authors investigated the guessing probability in the $\mathcal{TONS}$ scenario (as well as a few other causal scenarios), specifically focusing on the min-entropy associated with $n$ copies of noisy PR boxes (corresponding to violations of the CHSH Bell inequality). By numerically solving the guessing probability optimization problem for $n = 2,3,4,5$ runs, they found that Eve's optimal guessing probability is higher than the one obtained with the i.i.d. strategy. For cases $n=2,3$ they also found explicit solutions to the primal and dual forms of the optimization problem, thereby obtaining analytical expressions for the guessing probability. It was thereby suggested that the guessing probability may not decay exponentially with number of runs $n$ in these scenarios.


The extensive analytical and numerical investigations for small $n$ \cite{BPA18} had suggested that the randomness quantified by the min-entropy does not accumulate linearly, while not excluding such linear increase in the asymptotic limit $n \rightarrow \infty$. Our first statement is to show that in fact, the min-entropy does increase linearly for large $n$ in the TONS scenario. This result is established for all Bell inequalities which exhibit the phenomenon of monogamy of nonlocality in no-signalling theories (such games include the CHSH and the Braunstein-Caves chained Bell inequalities). Nevertheless, not all Bell inequalities are useful for randomness generation against no-signalling adversaries. 
Our second result is to show that in fact, achieving the maximal value in many two-player pseudotelepathy (PT) games does not certify randomness in the outcomes of one player against no-signalling adversaries. In other words, a large class of bipartite PT games do not exhibit monogamy of nonlocality in NS theories. As such, these games should not be used in DI randomness generation protocols against NS adversaries.  
Finally, as our third result we illustrate the power of this adversary in comparison to one limited by quantum theory, by deriving the min-entropy for the paradigmatic example of the three-input chained Bell inequality. 

\textit{Preliminaries: DI randomness and NS adversaries.-} Consider a single run of a Bell experiment in which Alice inputs $x \in \mathsf{X}$ to her device to get an output $a \in \mathsf{A}$, and similarly Bob inputs $y \in \mathsf{Y}$ into his device to get an output $b \in \mathsf{B}$. An adversary Eve holds in her possession a device with inputs $z \in \mathsf{Z}$ and outputs $e \in \mathsf{E}$, which may be correlated with the device held by Alice and Bob. The experimental setting is characterised by the conditional distributions (aka a behavior) $P_{\mathsf{A}\mathsf{B}\mathsf{E}|\mathsf{X}\mathsf{Y} \mathsf{Z}}$ which are required to satisfy the no-signalling conditions dictated by the spacetime locations of the measurements, i.e., the outputs of any subset of players is independent of the input of the remaining players to avoid signalling in each run. Eve's aim is to guess Alice's output $a$ at the end of the Bell test after getting to know Alice's input $x$. The randomness in Alice's outcome with respect to Eve is quantified in terms of the guessing probability (or equivalently its negative log termed the min-entropy). This measure of randomness finds direct application in device-independent quantum cryptography protocols for randomness certification \cite{our16, CR12, KPR+25} and key distribution \cite{MRC+14, Mas09, JMS20, V17, BHK05, BCK12}. The guessing probability is defined as the probability with which adversary can correctly guess Alice's outcomes and produce $e = a$ under the constraint that the marginal behavior of Alice-Bob $P_{\mathsf{A}\mathsf{B}|\mathsf{X}\mathsf{Y}}$ achieves value $\omega(G)$ in some non-local game $G$ with winning condition given by predicate $V(a,b,x,y) \in \{0,1\}$ and input distribution $\pi(x,y)$.


Now let us move to the scenario where Alice and Bob perform $n$ runs of the Bell experiment each instead of just a single run as above. In this case Alice, after interacting $n$ times with her device, ends up with input and output strings $\textbf{x} = \left(x_1, x_2, \ldots, x_n \right) \in \mathsf{X}^n := \mathsf{X} \times \ldots \times \mathsf{X}$ and $\textbf{a} = \left(a_1, a_2, \ldots, a_n \right) \in \mathsf{A}^n := \mathsf{A} \times \ldots \times \mathsf{A}$ respectively. Similarly, Bob obtains input and output strings $\textbf{y} \in \mathsf{Y}^n$ and $\textbf{b} \in \mathsf{B}^n$ respectively. The experimental scenario is characterised by the conditional distributions $\{P_{\mathsf{A}^n\mathsf{B}^n|\mathsf{X}^n\mathsf{Y}^n}(\textbf{a}, \textbf{b}| \textbf{x}, \textbf{y})\}$ termed a behavior. An adversary Eve has in her possession a device, with inputs $z \in \mathsf{Z}$ and outputs $e \in \mathsf{E}$, that may be correlated with the device held by Alice and Bob. In other words, the behavior of the three players is now described by $\{P_{\mathsf{A}^n\mathsf{B}^n\mathsf{E}|\mathsf{X}^n\mathsf{Y}^n\mathsf{Z}}(\textbf{a}, \textbf{b},e|\textbf{x}, \textbf{y},z)\}$ whose marginals on Alice-Bob are given by $\{P_{\mathsf{A}^n\mathsf{B}^n|\mathsf{X}^n\mathsf{Y}^n}(\textbf{a}, \textbf{b}| \textbf{x}, \textbf{y})\}$ and which we assume to not permit signalling between Eve and Alice-Bob. The adversary aims to guess Alice's output string after getting to know her input string for the $n$ runs, that is, Eve aims to achieve $e = \textbf{a}$ after receiving $\textbf{x}$.

The behavior of the device held by Alice-Bob and the conditional behaviors $P_{\mathsf{A}^n\mathsf{B}^n|\mathsf{X}^n\mathsf{Y}^n\mathsf{E}\mathsf{Z}}$ (conditioned on Eve's input-output $z, e$) obey certain constraints depending on the causal structure being considered. Here, we follow \cite{BPA18} in considering two structures of interest to DI protocols, termed $\mathcal{ABNS}$ and $\mathcal{TONS}$ defined as follows. 
\begin{widetext}
\begin{defi}
\label{def:ABNS-TONS}
The behavior $P_{\mathsf{A}^n\mathsf{B}^n|\mathsf{X}^n\mathsf{Y}^n\mathsf{E}\mathsf{Z}}$ is said to belong to the set $\mathcal{ABNS}$ if the conditional distributions obey
\begin{eqnarray}
\sum_{\textbf{b}} P_{\mathsf{A}^n\mathsf{B}^n|\mathsf{X}^n\mathsf{Y}^n\mathsf{E}\mathsf{Z}}(\textbf{a}, \textbf{b}| \textbf{x}, \textbf{y},e,z) &=& P_{\mathsf{A}^n|\mathsf{X}^n\mathsf{E}\mathsf{Z}}(\textbf{a}| \textbf{x},e,z) \quad \forall \textbf{a}, \textbf{x}, \textbf{y}, e, z, \nonumber \\
\sum_{\textbf{a}} P_{\mathsf{A}^n\mathsf{B}^n|\mathsf{X}^n\mathsf{Y}^n\mathsf{E}\mathsf{Z}}(\textbf{a}, \textbf{b}| \textbf{x}, \textbf{y},e,z) &=& P_{\mathsf{B}^n|\mathsf{Y}^n\mathsf{E}\mathsf{Z}}(\textbf{b}| \textbf{y},e,z) \quad \forall \textbf{b}, \textbf{x}, \textbf{y}, e, z.
\end{eqnarray}
The behavior $P_{\mathsf{A}^n\mathsf{B}^n|\mathsf{X}^n\mathsf{Y}^n\mathsf{E}\mathsf{Z}}$ is said to belong to the set $\mathcal{TONS}$ if the conditional distributions obey 
\begin{eqnarray}
\sum_{\textbf{b}_{>i}} P_{\mathsf{A}^n\mathsf{B}^n|\mathsf{X}^n\mathsf{Y}^n\mathsf{E}\mathsf{Z}}(\textbf{a}, \textbf{b}| \textbf{x}, \textbf{y},e,z) &=& P_{\mathsf{A}^n\mathsf{B}^{\leq i}|\mathsf{X}^n\mathsf{Y}^{\leq i}\mathsf{E}\mathsf{Z}}(\textbf{a}, \textbf{b}_{\leq i} | \textbf{x}, \textbf{y}_{\leq i}, e,z) \quad \forall \textbf{a}, \textbf{b}_{\leq i}, \textbf{x}, \textbf{y} , e, z, \nonumber \\
\sum_{\textbf{a}_{>i}} P_{\mathsf{A}^n\mathsf{B}^n|\mathsf{X}^n\mathsf{Y}^n\mathsf{E}\mathsf{Z}}(\textbf{a}, \textbf{b}| \textbf{x}, \textbf{y},e,z) &=& P_{\mathsf{A}^{\leq i}\mathsf{B}^{n}|\mathsf{X}^{\leq i}\mathsf{Y}^n\mathsf{E}\mathsf{Z}}(\textbf{a}_{\leq i}, \textbf{b} | \textbf{x}_{\leq i}, \textbf{y}, e,z) \quad \forall \textbf{a}_{\leq i}, \textbf{b}, \textbf{x}, \textbf{y} , e, z,
\end{eqnarray}
for all $i \in \{0,\ldots, n-1\}$, where $\textbf{b}_{>i} = \left(b_{i+1},\ldots, b_n \right)$, $\textbf{b}_{\leq i} = \left(b_1, \ldots, b_{i} \right)$ and similarly for the other variables. 
\end{defi}
\end{widetext}
In words, the set $\mathcal{ABNS}$ considers the causal scenario where the players' devices are shielded from each other, each player receives all their inputs at once, and produces their outputs at once independent of the inputs of the other player. It is equivalent to consider one big device on Alice's side (respectively Bob's side) that receives the input string $\textbf{x}$ and produces the output string $\textbf{a}$ - such a scenario is also referred to as parallel repetition. Note that although we have only focused on the causal constraints, the probabilities also obey non-negativity and normalisation constraints (each probability is $\geq 0$ and the probabilities of outputs $\textbf{a}, \textbf{b}$ sum to $1$ for each $\textbf{x}, \textbf{y}$, and for any $e, z$). 

The set $\mathcal{TONS}$ considers the causal scenario in which the players' devices are shielded from each other - each player's outputs do not depend on the inputs of the other player, and furthermore each player performs $n$ sequential measurements such that their outputs in prior rounds do not depend on their inputs in future rounds. Again the terms also obey the non-negativity and normalisation constraints to give rise to proper probability distributions. Evidently, we have $\mathcal{TONS} \subseteq \mathcal{ABNS}$ so that the $\mathcal{ABNS}$ gives the adversary greater power to prepare the device held by Alice-Bob. In other words, Eve's guessing probability in the $\mathcal{ABNS}$ scenario is greater than or equal to that in the $\mathcal{TONS}$ scenario. 

Every measurement of Eve can be seen as a choice of convex decomposition of Alice-Bob's behavior, i.e.,
\begin{eqnarray}
&&\sum_{e} P_{\mathsf{E}|\mathsf{Z}}(e|z) P_{\mathsf{A}^n\mathsf{B}^n|\mathsf{X}^n\mathsf{Y}^n\mathsf{E}\mathsf{Z}}(\textbf{a}, \textbf{b}| \textbf{x}, \textbf{y},e,z) \nonumber \\
&&= P_{\mathsf{A}^n\mathsf{B}^n|\mathsf{X}^n\mathsf{Y}^n}(\textbf{a}, \textbf{b}| \textbf{x}, \textbf{y}) \quad \forall \textbf{a}, \textbf{b}, \textbf{x}, \textbf{y},z.
\end{eqnarray}  
Eve's guessing probability of the outcomes for any fixed input $\textbf{x}^*$ of Alice is thus obtained by maximizing over all possible convex decompositions that are compatible with the marginal behavior achieving certain value for some non-local game $G$. The guessing probability is therefore the solution of the following linear programming problem:
\begin{widetext}
\begin{eqnarray}
\label{eq:TONS-guessprob}
P_g(\textbf{x}^*, \omega^*) = && \; \; \max_{P_{\mathsf{E}|\mathsf{Z}}, \; P_{\mathsf{A}^n\mathsf{B}^n|\mathsf{X}^n\mathsf{Y}^n\mathsf{E}\mathsf{Z}}} \; \;  \sum_{e, \textbf{b}} P_{\mathsf{E}|\mathsf{Z}}(e|z^*) P_{\mathsf{A}^n\mathsf{B}^n|\mathsf{X}^n\mathsf{Y}^n\mathsf{E}\mathsf{Z}}(\textbf{a} = e, \textbf{b}| \textbf{x}^*,\textbf{y},e,z^*) \nonumber \\
&&\text{s.t.} \; \; \; \omega^{P_{A^iB^i|X^{\leq i}Y^{\leq i}}}(G) = \omega^* \; \; \; \;  \forall i = 1,\ldots, n, \nonumber \\
&&\forall z, a_i, b_i, \textbf{x}_{\leq i}, \textbf{y}_{\leq i} \; \; P_{\mathsf{A}^i\mathsf{B}^i| \mathsf{X}^{\leq i} \mathsf{Y}^{\leq i}}(a_i, b_i| \textbf{x}_{\leq i}, \textbf{y}_{\leq i}) =  \sum_{e, \textbf{a}_{\neq i}, \textbf{b}_{\neq i}} P_{\mathsf{E}|\mathsf{Z}}(e|z) P_{\mathsf{A}^n\mathsf{B}^n|\mathsf{X}^n\mathsf{Y}^n\mathsf{E}\mathsf{Z}}(\textbf{a}, \textbf{b}| \textbf{x},\textbf{y},e,z), \nonumber \\
&& \forall z, \; \; \; \; \sum_{e} P_{\mathsf{E}|\mathsf{Z}}(e|z) = 1 \; \; \; \; \forall e, z \; \; \; P_{\mathsf{E}|\mathsf{Z}}(e|z) \geq 0 , \nonumber \\
&& \; \;  \forall \; e, z \; \; \; P_{\mathsf{A}^n\mathsf{B}^n|\mathsf{X}^n\mathsf{Y}^n, \mathsf{E} = e, \mathsf{Z} = z} \in \mathcal{TONS} 
\end{eqnarray}
\end{widetext}
Here, $\textbf{x}^*$ is a specific input string of Alice whose output Eve wants to guess, for which purpose she inputs the specific $z^*$ into her device, $\textbf{a}_{\neq i} = \left(a_1, \ldots, a_{i-1}, a_{i+1},\ldots, a_n \right)$ for any $i$, and similarly for $\textbf{b}_{\neq i}$. 


\textit{Randomness accumulation against $\mathcal{TONS}$ adversaries.-}
The guessing probability is very difficult to compute in general, as even though a linear program can be solved efficiently through simplex-based or interior point methods, the size of the linear program in \eqref{eq:TONS-guessprob} grows exponentially in $n$ (the behaviors consist of $|\mathsf{A}|^n|\mathsf{B}|^n|\mathsf{X}|^n|\mathsf{Y}|^n$ probabilities). It is useful to solve the optimisation problem for small $n$ to see if an i.i.d. strategy is always optimal, even though a linear increase in the min-entropy is not precluded if such is not the case. Specifically, in \cite{BPA18} it was found that
\begin{eqnarray}
P_g(\textbf{x}^*, \omega^*) =  \left\{
\begin{array}{ll}
      1 - \frac{3v}{4} & \; \;   n=2 \\
      1 - \frac{7v}{8} & \; \; n=3 \wedge v \geq \sqrt{5} - 2,
\end{array} 
\right. 
\end{eqnarray} 
for the CHSH Bell scenario (two binary inputs $x, y \in \{0,1\}$ per player in each round) with the marginal behavior $P_{\mathsf{A}^n \mathsf{B}^n|\mathsf{X}^n\mathsf{Y}^n} = \prod_{i=1}^n \text{PR}_v(a_i,b_i|x_i, y_i)$ where $\text{PR}_v(a_i,b_i|x_i, y_i) = (3+v)/8$ for $a_i \oplus b_i = x_i \cdot y_i$ and $(1-v)/8$ otherwise.

Here we observe that the guessing probability does go down exponentially with $n$ in the $\mathcal{TONS}$ setting for general Bell inequalities which obey a monogamy of non-locality property (which includes the CHSH inequality considered above), we defer the proof to the Appendix \ref{sec:App-TONS-rand} where we explicitly compute the constants in the exponential decrease for the three-input chained Bell inequality. 

\begin{thm}
\label{thm:guess-prob-tons-main}
Let $G$ denote a non-local game that exhibits the phenomenon of monogamy of non-locality in no-signalling theories. Consider a device-independent scenario where the marginal behavior of Alice and Bob has value $\omega^*$ for a non-local game $G$ in each of $n$ rounds. Then the probability $P_g(\textbf{x}^*, \omega^*)$ with which Eve guesses Alice's outputs for any input string $\textbf{x}^*$ in the $\mathcal{TONS}$ scenario decreases as $P_g(\textbf{x}^*, \omega^*) \leq  e^{-\Omega(n)}$.
\end{thm}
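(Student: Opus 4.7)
The plan is to exploit the TONS causal structure to reduce the $n$-round guessing problem to a sequence of single-round monogamy applications, and to compose the resulting per-round bounds multiplicatively.

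First I would extract a single-round consequence of the monogamy hypothesis: whenever $G$ exhibits NS monogamy of nonlocality, there exists $\delta = \delta(\omega^*) > 0$ such that for any tripartite NS behavior $P_{ABE \mid XYZ}$ whose Alice--Bob marginal achieves game value $\omega^*$, Eve's probability of correctly guessing Alice's output from $(e,z,x^*)$ is at most $1-\delta$. This is essentially the content of NS monogamy restated in guessing-probability form and is known explicitly (and is tight) for games such as CHSH and the chained Bell inequality.

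Next, I would write the $n$-round success probability as a telescoping product. Combining the Alice--Bob NS condition with Alice's time-ordered constraint, for any fixed Eve output $e$ and input $z^*$,
\begin{equation*}
P(\textbf{a} = e \mid \textbf{x}^*, e, z^*) = \prod_{i=1}^n P\left(a_i = e_i \mid \textbf{a}_{<i} = e_{<i},\, \textbf{x}^*_{\leq i},\, e, z^*\right).
\end{equation*}
For each $i$ and each realization $h_i = (e, \textbf{a}_{<i}, \textbf{b}_{<i}, \textbf{x}^*_{<i}, \textbf{y}_{<i})$ of the past, the conditional behavior of $(a_i, b_i \mid x_i, y_i)$ is itself a valid bipartite NS box (inheriting NS from the TONS property), and the convex mixture of these conditional boxes over $h_i$ reproduces the round-$i$ Alice--Bob marginal, which by the LP constraint has game value exactly $\omega^*$.

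The central step is to combine single-round monogamy with this convex decomposition. Because $\mathbb{E}_{h_i}[\omega(P^{(i)}_{h_i})] = \omega^*$ by linearity of the game value, either concavity of the monogamy guessing bound $\omega \mapsto f(\omega)$, giving $\mathbb{E}_{h_i}[f(\omega(P^{(i)}_{h_i}))] \leq f(\omega^*) = 1 - \delta$, or a Markov-type restriction to branches with $\omega(P^{(i)}_{h_i}) \leq \omega^* + \eta$, yields an averaged per-round guessing bound of at most $1 - \delta + o(1)$. Iterating this across the $n$ rounds via the law of iterated expectations applied to the chain rule above then delivers $P_g(\textbf{x}^*, \omega^*) \leq (1 - \delta)^n = e^{-\Omega(n)}$. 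The main obstacle I expect is precisely this averaging step: one must ensure that the per-round monogamy bound, nominally derived for the round-$i$ marginal value $\omega^*$, survives conditioning on past rounds in a way that composes across all $n$ rounds. The cleanest route is to establish concavity of the monogamy guessing function so Jensen's inequality applies directly to the convex decomposition induced by Eve's measurement and the history; for the chained Bell inequality this concavity follows from the explicit monogamy relations derived in Appendix \ref{sec:App-TONS-rand}, which then yield the exponential-decay constants quoted in the theorem.
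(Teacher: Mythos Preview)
Your chain-rule plus Jensen strategy has a genuine gap, and in fact the bound it would produce is already falsified by the small-$n$ results of \cite{BPA18} that motivate the theorem. For the CHSH game the single-round NS monogamy bound is $P_g \leq 1 - v/2$ (with $v$ the noisy-PR parameter), so your argument would give $P_g(\textbf{x}^*,\omega^*) \leq (1-v/2)^n$. But \cite{BPA18} proves analytically that for $n=2$ the optimal $\mathcal{TONS}$ guessing probability is $1-3v/4$, and $1-3v/4 > (1-v/2)^2$ for every $v\in(0,1)$. So the per-round multiplicative bound you claim is strictly violated already at $n=2$; the i.i.d.\ rate $(1-\delta)$ is simply not a valid per-round bound in $\mathcal{TONS}$.

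The failure occurs exactly at the step you flagged as the main obstacle. In the telescoping product you condition round $i$ on the \emph{success event} $\textbf{a}_{<i}=e_{<i}$ together with Eve's full output $e$; under this conditioning the round-$i$ box need not have value $\omega^*$, and the distribution over histories entering the product is weighted by past success probabilities, not by the natural history measure. Your Jensen step uses $\mathbb{E}_{h_i}[\omega(P^{(i)}_{h_i})]=\omega^*$, which holds only for the \emph{unconditioned} average over all histories; once you restrict to success-conditioned branches, the average value can drop (Eve can arrange that success in early rounds steers later rounds toward more local, more guessable boxes), and concavity of $f$ gives you nothing. There is no martingale structure here to rescue the iteration. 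The paper's proof avoids this entirely: it passes to the tripartite \emph{guessing game} $G_g$ (whose NS value is $<1$ by the monogamy hypothesis), invokes the $t$-out-of-$n$ no-signalling parallel-repetition concentration theorem of Buhrman--Fehr--Schaffner \cite{BFS14,Hol09} for $G_g^{t/n}$, and combines this with Azuma--Hoeffding on the Alice--Bob winning fraction and a Chernoff bound on the fraction of rounds where Eve's input matches Alice's. This yields the weaker exponent $e^{-\delta^4\mu n}$ (with an explicit constant prefactor), which is consistent with the super-i.i.d.\ behavior at small $n$ and still gives $e^{-\Omega(n)}$ asymptotically.
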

The proof of the theorem follows by first considering a tripartite guessing game $G_g$ counterpart of the bipartite game $G$ (the idea of using guessing games in security proofs comes from \cite{V17}). Secondly, we leverage powerful $t$-out-of-$n$ parallel repetition theorems for the no-signalling value of nonlocal games by Holenstein and Buhrman et al. in \cite{Hol09, BFS14}, and consider the value of the parallel repeated guessing game. Finally, we apply suitable concentration theorems to show that Eve's guessing probability in the $\mathcal{ABNS}$ scenario (and consequently also in the $\mathcal{TONS}$ scenario) decreases exponentially in $n$.

\textit{Suitable Bell tests for $\mathcal{NS}$-secure randomness.-} We have seen that the min-entropy in the outputs increases linearly against $\mathcal{TONS}$ adversaries for inequalities such as the chain inequality. Is this property true for all Bell inequalities, which inequalities are suited for randomness or key generation against no-signalling adversaries? 
In this section, we show that in fact most bipartite pseudo-telepathy (PT) games (games with quantum value equal to no-signalling value) are \textit{unsuitable} for randomness generation against no-signalling adversaries. We prove in Appendix \ref{sec:BoundRandomness-NS} that for such games, even the observation of the maximal quantum value by Alice and Bob does not prevent a no-signalling eavesdropper from being able to guess Alice's outputs perfectly. In particular, we consider PT games with the optimal strategy involving measurements on a maximally entangled state of local dimension $d \geq 3$. This is formalised as Assumption 1 in App. \ref{sec:BoundRandomness-NS}. Note that all known PT games belong to this class and it is known that the measurements in such games define a weak Kochen-Specker set \cite{RW04}. We consider games in which the set of local contextual correlations (defining the so-called Lovasz-theta set)  are contained within a graph-theoretic set known as the (normalised) fractional stable set polytope. This is formalised as Assumption 2 in App. \ref{sec:BoundRandomness-NS}, again most of the known PT games belong to this class as shown in \cite{GKS24}. We show in the following theorem that all such PT games (which includes the known games such as the Magic Square, quantum coloring games, quantum independent-set games \cite{MSS13} etc.) are unsuitable for randomness generation against NS adversaries. To prove this, we construct an attack strategy that allows a no-signalling adversary to perfectly guess Alice's outputs even as Alice and Bob observe maximal violation of the Bell inequality.

\begin{thm}
\label{thm:PTgame-NS}
Let $G$ denote a bipartite pseudotelepathy game with respect to maximally entangled state $|\phi_d \rangle = \frac{1}{\sqrt{d}} \sum_{i=0}^{d-1} | i \rangle_A |i \rangle_B$ for some $d \geq 3$ and with the corresponding weak KS set satisfying Assumption 2. Then  for any input $x^*$ of one player, there exists a strategy that allows a no-signalling adversary to perfectly guess the outputs of $x^*$ even when the  maximum no-signaling value $\omega_{\mathcal{NS}}(G)$ is achieved, i.e., the single-round guessing probability 
\begin{equation}
P_g(x^*, \omega_{\mathcal{NS}}(G)) = 1.
\end{equation}
\end{thm}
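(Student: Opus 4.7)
The plan is to prove the theorem constructively by exhibiting an explicit no-signalling attack that decomposes the target PT correlations into a mixture of boxes, each of which is deterministic on Alice's distinguished input $x^*$, while still winning the game with probability one. Concretely, Eve aims to produce boxes $\{Q^{a^*}_{AB|XY}\}_{a^* \in \mathsf{A}}$ in $\mathcal{NS}$, together with a distribution $q(a^*)$, such that (i) $Q^{a^*}(a|x^*,y) = \delta_{a,a^*}$ for all $y$, (ii) $\sum_{a^*} q(a^*) Q^{a^*}_{AB|XY} = P^{\text{PT}}_{AB|XY}$, and (iii) each $Q^{a^*}$ wins $G$ with probability $1$. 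If these three properties hold, then outputting the label $a^*$ of the branch of her decomposition immediately yields $P_g(x^*, \omega_{\mathcal{NS}}(G)) = 1$.

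To build the $Q^{a^*}$ I would first invoke the structure of PT on the maximally entangled state $|\phi_d\rangle$: the optimal projective measurements produce rank-one projectors whose orthogonality graph $\Gamma_G$ (vertices: (input, output) pairs; edges: pairs forbidden either by orthogonality within a measurement or by the winning predicate) is a weak Kochen--Specker graph, as in the Assumption 1 hypothesis and the characterization in \cite{RW04}. By Assumption 2, the quantum strategy corresponds to a point in the normalised fractional stable-set polytope $\mathrm{FSTAB}(\Gamma_G)$, so the Born probabilities $p(a|x) = 1/d$ (obtained from $|\phi_d\rangle$) can be written as a convex combination of vertices of $\mathrm{FSTAB}(\Gamma_G)$, i.e.\ fractional independent-set indicators. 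Each such vertex assigns, to every input, a single outcome in a way that respects all edges of $\Gamma_G$ except for the intra-measurement normalisation relations (this is precisely where a classical assignment fails, by the weak KS property, but a one-sided assignment is still available). For each $a^*$, I would select the sub-collection of these fractional independent sets that assigns $a^*$ to input $x^*$; on Alice's side this pins her response to $x^*$ deterministically, while on the remaining inputs her responses are still governed by an independent-set pattern.

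For Bob's side of $Q^{a^*}$ I would use the PT winning condition together with the $|\phi_d\rangle$ structure: winning with probability one forces, for every $y$, that Bob's outcome $b$ satisfies $V(a^*,b,x^*,y)=1$, and moreover that Bob's conditional distribution is determined (up to freedom inside equivalence classes of the graph) by Alice's declared outcomes. Concretely, I would define $Q^{a^*}(b|y,a,x) = P^{\text{PT}}(b|y, a, x)$ on the inputs for which $P^{\text{PT}}(a|x) > 0$ in the chosen fractional stable-set atom, and extend by an arbitrary winning response on the remaining inputs, which is possible because $\omega_{\mathcal{NS}}(G) = 1$ implies the slice of winning Bob-responses is non-empty for every $y$. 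Verifying the no-signalling conditions of $Q^{a^*}$ reduces to checking that the marginal on Alice depends only on $x$ (automatic from the construction) and that the marginal on Bob depends only on $y$ (which follows from the weak KS orthogonality relations, making Bob's distribution invariant under relabeling of Alice's fractional-stable-set atom). Averaging over the selected atoms with the appropriate weights recovers $P^{\text{PT}}_{AB|XY}$ by construction, establishing (ii).

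The main obstacle is the last verification: the no-signalling property of the Bob marginal in each $Q^{a^*}$. Since Alice's response on inputs $x \neq x^*$ is tied (through the independent-set atom) to a specific outcome, one must ensure that Bob's local distribution $\sum_a Q^{a^*}(a,b|x,y)$ does not depend on $x$. Here I expect the proof to rely crucially on both Assumption~1 (projective measurements on $|\phi_d\rangle$, which fixes Bob's conditional distributions as a transpose of Alice's projectors) and Assumption~2 (containment in $\mathrm{FSTAB}(\Gamma_G)$, which guarantees that the same fractional weights on Bob's side reproduce the flat marginal $1/d$). Handling this cleanly, likely by a separate lemma showing that fractional-stable-set atoms on $\Gamma_G$ induce well-defined NS boxes for the $|\phi_d\rangle$-symmetric PT correlations, is the delicate step; once established, points (i)--(iii) are immediate and $P_g(x^*, \omega_{\mathcal{NS}}(G)) = 1$ follows.
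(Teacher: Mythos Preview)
Your high-level plan --- use Assumption~2 to decompose the marginals into vertices of the normalised $\mathrm{FSTAB}(\Gamma)$, then lift each vertex to a bipartite no-signalling box that still wins the PT game and is deterministic on Alice's input $x^*$ --- is exactly the skeleton of the paper's argument. The paper invokes the Balinski/Nemhauser--Trotter half-integrality theorem to conclude that every such vertex is a $\{0,\tfrac12,1\}$-valued assignment $f$ on the weak KS set, and then (Lemma~\ref{lem:localf-strat}) uses the quantum assignment $|\langle v|\cdot\rangle|^2$ to guarantee, for each outcome $|v\rangle$ of $x^*$, an $f_v$ with $f_v(|v\rangle)=1$.

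Two points where your plan diverges, one harmless and one a genuine gap:

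\emph{First}, your condition~(ii) that $\sum_{a^*}q(a^*)Q^{a^*}=P^{\mathrm{PT}}$ is stronger than what the theorem asks. The guessing-probability optimisation in Eq.~\eqref{eq:TONS-guessprob} only constrains the Alice--Bob marginal to achieve game value $\omega_{\mathcal{NS}}(G)=1$, not to equal the specific quantum behaviour. The paper exploits this: Eve simply outputs $e\in x^*$ uniformly and hands Alice--Bob the box $P^{(f_e)}$; the resulting AB marginal is \emph{not} $P^{\mathrm{PT}}$ and need not be. Dropping~(ii) removes most of the difficulty you flag. Relatedly, your description of the FSTAB vertices as assigning ``to every input, a single outcome'' is not right: they are genuinely half-integral (two outcomes with weight~$\tfrac12$ on some bases), which is precisely how they evade the KS obstruction.

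\emph{Second}, your proposed Bob-side rule $Q^{a^*}(b\,|\,y,a,x)=P^{\mathrm{PT}}(b\,|\,y,a,x)$ is not the paper's construction, and the ``delicate step'' you identify --- that $\sum_a f(|a_x\rangle)\,P^{\mathrm{PT}}(b|y,a,x)$ is independent of $x$ --- has no obvious reason to hold. The paper's Lemma~\ref{lem:bipNS-from-localf} instead defines the conditional purely through orthogonality, $P^{(f)}(u\,|\,w,x,y)=\mathbb{1}[\langle u|w\rangle\neq 0]$, and then checks directly that both marginals collapse to $f$ itself (so no-signalling is automatic and symmetric in the two parties). That explicit orthogonality-based lift from a half-integral assignment to a bipartite NS box is the missing lemma in your outline; once you replace your $P^{\mathrm{PT}}$-conditional with it, the rest of your argument goes through essentially as in the paper.
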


Apart from DI protocols, it is also worth considering the implications  of Thm. \ref{thm:PTgame-NS} from a fundamental point of view. Several interesting open questions remain in quantum foundations - (i) Most non-local games display the phenomenon of monogamy of non-locality in no-signalling theories \cite{Toner09}. Specifically here we consider the tradeoff between the achievement of the maximal value of the non-local game and the correlation of a single observable involved in the game with a third player. For instance, the achievement of the no-signalling value of $1$ for the CHSH game is well-known to imply that the correlation between any observable $A_i$ of one player and $E$ of a third player is zero, i.e., when the no-signalling value of the CHSH game is achieved, the outputs become maximally random for any no-signalling adversary. Theorem \ref{thm:PTgame-NS} says that this property of monogamy does not hold for general bipartite pseudotelepathy games in no-signalling theories. In this regard, we generalise the results from \cite{RH14, ACP+16} where the property was observed for the famous Magic Square PT game. 
(ii) It is known that quantum theory does not allow the realisation of non-local extremal no-signalling behaviors \cite{RTHH16}. A fundamental question that is open is - what is the lowest dimensional face of the no-signalling set that is realisable in quantum theory? How close to extremal non-local behaviors can one get using quantum correlations? The property of reaching a nonlocal face of the no-signalling set only holds for quantum correlations achieving the maximum value in PT games. As a corollary of Theorem \ref{thm:PTgame-NS}, we observe that the minimum dimension of a nonlocal no-signalling face reached by quantum correlations in the Bell scenario $(m, d; m, d)$ (with $m$ inputs and $d \geq 3$ outputs for the two players) is lower bounded by $d-1$. Specifically, let $\mathit{F}$ denote the non-local face of the no-signalling polytope reached by the quantum behavior winning the bipartite PT game. Let $P^{\text{att}}_{\mathsf{A}\mathsf{B}|\mathsf{X}\mathsf{Y}}$ denote the set of marginal behaviors from the no-signalling attack strategies allowing Eve to perfectly guess Alice's output under the constraint that Alice-Bob achieve maximum quantum value for the inequality. Then it follows, that
\begin{equation}
\text{dim}\left(\mathit{F} \right) \geq \text{dim}\left(\textsf{aff}\left(P^{\text{att}}_{\mathsf{A}\mathsf{B}|\mathsf{X}\mathsf{Y}}\right)\right) \geq d-1,
\end{equation}
where 
\begin{equation}
\textsf{aff}\left(P^{\text{att}}_{\mathsf{A}\mathsf{B}|\mathsf{X}\mathsf{Y}}\right) := \bigcup \left\{S \; | S \; \text{is affine and} \; P^{\text{att}}_{\mathsf{A}\mathsf{B}|\mathsf{X}\mathsf{Y}} \in S \right\},
\end{equation}
where we recall that a set is affine if and only if it contains all affine combinations of its elements. The lower bound of $d-1$ comes from the fact that by Thm. \ref{thm:PTgame-NS}, 
each of the behaviors $P^{\text{att}}_{\mathsf{A}\mathsf{B}|\mathsf{X}\mathsf{Y}}$ is deterministic for one specific input-output combination $(x,a)$ of Alice. Clearly, the $d$ behaviors that are deterministic for any single $x$ are affinely independent, so that $\text{dim}\left(\textsf{aff}\left(P^{\text{att}}_{\mathsf{A}\mathsf{B}|\mathsf{X}\mathsf{Y}}\right)\right) \geq d-1$.
In this regard, we extend to dimension $d-2$ faces for such bipartite scenarios, the results of \cite{RTHH16} where it was shown that quantum correlations cannot reach the nonlocal vertices (dimension $0$ faces) of the no-signalling set.

\textit{Comparison with Quantum Adversaries.-} The chain inequality \cite{BC89} with three inputs per player has been found to be ideally suited for randomness generation against both no-signalling \cite{BHK05, BCK12} and quantum adversaries \cite{MPA11}, in both cases outperforming the commonly used CHSH test in terms of generation rate and noise tolerance. On the other hand, the analytical form of the min-entropy (and the quantum conditional von Neumann entropy) are still unknown for this inequality. Given its importance for DI protocols, we particularly investigate this inequality and analytically compare the local guessing probability for both quantum and no-signalling adversaries as a function of its violation in Appendix \ref{sec:ChainIneq-GuessProb}. 

\textit{Open Questions.-} We have seen that the raw randomness in the outputs accumulates linearly in the $\mathcal{TONS}$ scenario, it would be good to show that the smooth min-entropy is also bounded in this way. We also note that the authors of \cite{BPA18} also consider an alternative scenario that they term $\mathcal{WTONS}$ whereby future runs cannot influence past runs, and no-signalling holds at each individual run, but contrarily to $\mathcal{ABNS}$ and $\mathcal{TONS}$, no-signalling between Alice and Bob does not hold throughout the experiment. Specifically, the two devices are allowed to communicate between successive runs, so that Alice's marginal in the $k$-th run may also depend on Bob's inputs in all runs $\leq k$. The $\mathcal{WTONS}$ scenario is crucial in DI-QKD setups where one necessarily needs to allow communication between rounds of the protocol (this is of less importance in DI-QRNG setups where the protocol is executed by a single party using two shielded devices). We prove an analytical statement on the behavior of the guessing probability in the $\mathcal{WTONS}$ scenario in forthcoming work \cite{RLW25}. Finally, while we have been concerned with the accumulation of randomness in the outputs, it is also important to study the possibility of time-ordered no-signalling privacy amplification to definitively answer the question of achieving device-independent no-signalling security. In this regard, as we have mentioned while privacy amplification is possible in the $\cal{FULL-NS}$ setting, several no-go theorems have been proven in the $\mathcal{TONS}$ setting for certain classes of non-local games, it remains to be seen how far these can be extended or circumvented.



\textit{Acknowledgments.-} Useful discussions with Pawe{\l} Horodecki and Stefano Pironio are acknowledged. We acknowledge support from the General Research Fund (GRF) grant No. 17211122 and the Research Impact Fund (RIF) No. R7035-21.


\bibliographystyle{apsrev4-2}
\bibliography{common}

\onecolumngrid
\appendix



\section{Randomness in the raw output string of honest players against NS adversaries}
\label{sec:App-TONS-rand}


Consider a Bell experiment in which Alice, after interacting $n$ times with her device, ends up with input and output strings $\textbf{x} = \left(x_1, x_2, \ldots, x_n \right) \in \mathsf{X}^n := \mathsf{X} \times \ldots \times \mathsf{X}$ and $\textbf{a} = \left(a_1, a_2, \ldots, a_n \right) \in \mathsf{A}^n := \mathsf{A} \times \ldots \times \mathsf{A}$ respectively. Similarly, Bob obtains input and output strings $\textbf{y} \in \mathsf{Y}^n$ and $\textbf{b} \in \mathsf{B}^n$ respectively. The set of their conditional distributions $\{P_{\mathsf{A}^n\mathsf{B}^n|\mathsf{X}^n\mathsf{Y}^n}(\textbf{a}, \textbf{b}| \textbf{x}, \textbf{y})\}$ is termed a behavior. An adversary Eve has in her possession a device, with inputs $z \in \mathsf{Z}$ and outputs $e \in \mathsf{E}$, that may be correlated with the device held by Alice and Bob. In other words, the behavior of the three players is described by $\{P_{\mathsf{A}^n\mathsf{B}^n\mathsf{E}|\mathsf{X}^n\mathsf{Y}^n\mathsf{Z}}(\textbf{a}, \textbf{b},e|\textbf{x}, \textbf{y},z)\}$ whose marginals on Alice-Bob are given by $\{P_{\mathsf{A}^n\mathsf{B}^n|\mathsf{X}^n\mathsf{Y}^n}(\textbf{a}, \textbf{b}| \textbf{x}, \textbf{y})\}$ and which we assume to not permit signalling between Eve and Alice-Bob. The adversary aims to guess Alice's outputs after getting to know her inputs, that is, Eve aims to achieve $e = \textbf{a}$ after receiving $\textbf{x}$. 
The behavior of the device held by Alice-Bob and the conditional behaviors $P_{\mathsf{A}^n\mathsf{B}^n|\mathsf{X}^n\mathsf{Y}^n\mathsf{E}\mathsf{Z}}$ obey certain constraints depending on the causal structure being considered, here we consider the causal structures termed $\mathcal{ABNS}$ and $\mathcal{TONS}$ \cite{BPA18} defined in Def. \ref{def:ABNS-TONS}.

As we have seen, Eve's guessing probability in the $\mathcal{ABNS}$ scenario is greater than or equal to that in the $\mathcal{TONS}$ scenario. In this Appendix, we aim to show that the guessing probability against no-signaling adversaries goes down exponentially in $n$ when the marginal behavior $P_{\mathsf{A}^n\mathsf{B}^n|\mathsf{X}^n\mathsf{Y}^n}$ of Alice-Bob displays a violation for a large class of Bell inequalities, in the $\mathcal{ABNS}$ and consequently also in the $\mathcal{TONS}$ scenario.

For concreteness, we fix the Bell inequality to be the Braunstein-Caves chained Bell inequality \cite{BC89, SASA16} for $m=3$ inputs per player. The chained Bell inequality has found repeated use in proofs of security against no-signalling adversaries owing to the fact that the quantum value tends towards the no-signalling value in the limit of large $m$. In each round of the game, Alice and Bob receive respective inputs $x \in \mathsf{X} = \{0,1, 2\}$, $y \in \mathsf{Y} = \{0,1, 2\}$ with uniform probability (the input distribution is $\pi(x,y) = \frac{1}{9}$ for all $x, y$) and produce respective outputs $a \in \mathsf{A} = \{0,1\}$, $b \in \mathsf{B} = \{0,1\}$. The winning condition in the game is given by the predicate (here we consider a complete-support variant of the game)
\begin{equation}
\label{eq:pred-chaingame}
V(a,b,x,y) =   \left\{
\begin{array}{ll}
      1 & a \oplus b = 0 \wedge (x,y) \in \{(0,0),(1,0),(1,1),(2,1),(2,2)\} \\
      1 & a \oplus b = 1 \wedge (x,y) = (0,2) \\
      1 & (x,y) \in \{(0,1), (1,2), (2,0)\}\\ 
      0 & \text{otherwise} \\
\end{array} 
\right. 
\end{equation}
A strategy $\{P_{\mathsf{A}\mathsf{B}|\mathsf{X}\mathsf{Y}}(a, b| x, y)\}$ is said to belong to the set of non-signalling strategies $\mathcal{NS}$ if the output distribution of one player does not depend on the inputs of the other player. The non-signalling game value $\omega_{\mathcal{NS}}(G)$ is given as
\begin{equation}
\omega_{\mathcal{NS}}(G) = \sup_{P_{\mathsf{A}\mathsf{B}|\mathsf{X}\mathsf{Y}} \in \mathcal{NS}} \sum_{x,y,a,b} \pi(x,y) V(a,b,x,y) P_{\mathsf{A}\mathsf{B}|\mathsf{X}\mathsf{Y}}(a,b|x,y).
\end{equation}
For the $3$-input chain game, it is well-known that the no-signaling value is $1$ with the strategy involving the use of a PR-type behavior.

We are interested in the probability with which Eve can guess Alice's outputs upon coming to know Alice's inputs, under the constraint that the Bell value of the Alice-Bob behavior in each round is some specified super-classical value $\omega^*$. Specifically for any given input string $\textbf{x}^*$, Eve chooses a specific input $z^*$ to her device and attempts to obtain the output $e = \textbf{a}$. The probability with which Eve can guess Alice's output string is given by the solution to the optimisation problem given in Eq.\eqref{eq:TONS-guessprob}. We now show that the solution to this optimisation problem decreases exponentially with $n$ when the true Bell value of the marginal behavior of Alice-Bob is at least $\omega^*$ for each of the $n$ rounds. 
We emphasise that even though we prove the theorem in terms of the specific chained Bell inequality, the statement applies generally to all inequalities that exhibit the phenomenon of monogamy of nonlocality in no-signalling theories, as remarked earlier. 
\begin{thm}
\label{thm:guess-prob-tons-2}
Let $G$ denote a non-local game that exhibits the phenomenon of monogamy of non-locality in no-signalling theories.
Consider a device-independent scenario where the marginal behavior of Alice and Bob has value $\omega^*$ for a non-local game $G$ in each of $n$ rounds. Then the probability $P_g(\textbf{x}^*, \omega^*)$ with which Eve guesses Alice's outputs for any input string $\textbf{x}^*$ in the $\mathcal{TONS}$ scenario decreases as $P_g(\textbf{x}^*, \omega^*) \leq  e^{-\Omega(n)}$.
\end{thm}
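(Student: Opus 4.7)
My plan is to reduce the bound on Eve's guessing probability to a known no-signalling parallel-repetition theorem, combined with a martingale concentration argument. First, I would introduce an auxiliary tripartite single-round game $G_g$: Alice and Bob receive inputs $x, y$ sampled from $\pi(x,y)$ and output $a, b$, while Eve is given $z = x$ and outputs a guess $e \in \mathsf{A}$. The winning predicate is $V_g(a,b,e,x,y) = V(a,b,x,y) \cdot \mathbf{1}[e = a]$, so winning $G_g$ requires both that Alice--Bob win $G$ \emph{and} that Eve correctly guesses $a$. The monogamy-of-nonlocality hypothesis on $G$ then delivers a strict gap: whenever Alice--Bob's marginal has value $\omega^*$ above the local value, the no-signalling value of the tripartite game satisfies $\omega_{\mathcal{NS}}(G_g) = \gamma < \omega^*$ for an explicit constant $\omega^* - \gamma > 0$.

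Second, I would invoke the $t$-out-of-$n$ parallel-repetition theorems for no-signalling games of Holenstein \cite{Hol09} and of Buhrman--Fehr--Schaffner \cite{BFS14}. Viewing $\textbf{x}, \textbf{a}$ for Alice, $\textbf{y}, \textbf{b}$ for Bob, and $\textbf{z} = \textbf{x}^*$, $\textbf{e} \in \mathsf{A}^n$ for Eve together as a strategy for the $n$-fold parallel game $G_g^{\times n}$, the $\mathcal{ABNS}$ constraints translate directly into the tripartite no-signalling conditions that those theorems require. The resulting bound states that for any threshold $t$ with $t/n \geq \gamma + \varepsilon$, the probability of winning at least $t$ of the $n$ parallel instances of $G_g$ is at most $e^{-\Omega(n)}$.

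Third, I would combine this with a concentration bound on how often Alice--Bob actually win $G$. The per-round constraint in Eq.~\eqref{eq:TONS-guessprob} forces $\mathbb{E}\bigl[V(a_i,b_i,x_i,y_i) \mid \text{past rounds}\bigr] = \omega^*$, so $N_w := \sum_{i=1}^n V(a_i,b_i,x_i,y_i)$ is a bounded-difference martingale centered at $n\omega^*$, and Azuma--Hoeffding yields $\Pr[N_w < n(\omega^* - \delta)] \leq e^{-\Omega(\delta^2 n)}$. On the event $\{\textbf{e} = \textbf{a}\}$ the number of rounds in which $G_g$ is jointly won is exactly $N_w$, so $\{\textbf{e} = \textbf{a}\} \subseteq \{N_w < n(\omega^* - \delta)\} \cup \{\text{win } G_g \text{ in}\ \geq n(\omega^* - \delta) \text{ rounds}\}$. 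Choosing $\delta$ small enough that $\omega^* - \delta > \gamma + \varepsilon$ makes both terms exponentially small, proving $P_g(\textbf{x}^*, \omega^*) \leq e^{-\Omega(n)}$ in the $\mathcal{ABNS}$ scenario; the $\mathcal{TONS}$ bound follows from $\mathcal{TONS} \subseteq \mathcal{ABNS}$.

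The step I expect to be the main obstacle is the first one, namely converting the monogamy-of-nonlocality assumption into a \emph{quantitative} gap $\omega^* - \gamma$ for the specific tripartite game $G_g$ and an input distribution compatible with the parallel-repetition framework. Classical monogamy results are often stated as pairwise tradeoffs between correlators rather than bounds on tripartite game values, so one must choose Eve's input register and the form of $V_g$ carefully so that the tradeoff becomes an exploitable gap in $\omega_{\mathcal{NS}}(G_g)$; for the chained Bell inequality this should be tractable by direct linear-program analysis. A secondary subtlety is justifying Azuma--Hoeffding despite the non-iid structure of $\mathcal{TONS}$ behaviors, but this is guaranteed by the per-round value constraint, which directly furnishes the required bounded martingale differences.
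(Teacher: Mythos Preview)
Your approach is essentially the paper's: define a tripartite guessing game $G_g$, use monogamy to get $\omega_{\mathcal{NS}}(G_g) < \omega^*$, apply the Buhrman--Fehr--Schaffner $t$-out-of-$n$ no-signalling parallel repetition, and close with Azuma--Hoeffding on the number of rounds in which Alice--Bob win $G$.

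The one technical divergence concerns Eve's input. You set $z = x$ deterministically, which keeps the predicate clean ($V_g = V \cdot \mathbf{1}[e=a]$) but leaves the tripartite input distribution without complete support---and the BFS14 theorem you invoke explicitly requires complete support. The paper instead gives Eve an \emph{independent} uniform input $z$ and relaxes the predicate so that Eve automatically wins whenever $z \neq x$; this restores complete support at the price of an extra generalised-Chernoff step showing that $z_i = x_i$ in a linear fraction of the winning rounds. You already flag ``an input distribution compatible with the parallel-repetition framework'' as your expected main obstacle, and this is precisely the fix required.
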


\begin{proof}
Let $G$ denote the chain inequality non-local game with $m=3$ inputs per player as defined above. The three-player guessing game counterpart $G_{g}$ is defined as follows (here we consider a complete-support variant of the guessing game in \cite{V17}). In each round of the guessing game $G_{g}$, Alice, Bob and Eve receive independent and uniformly distributed inputs $x, y, z \in \{0,1, 2\}$, i.e., the input distribution $\pi(x,y,z) = \frac{1}{27}$ for all $x,y,z$. The players produce outputs $a, b, e \in \{0,1\}$ respectively. The players win if and only if the outputs of Alice and Bob satisfy the chain winning condition, while at the same time Eve's output equals that of Alice in the cases when $z = x$. For other inputs, every output wins. To be precise, the predicate $V_{g}(a,b,e,x,y,z)$ of the guessing game is defined as 


\begin{equation}
V_{g}(a,b,e,x,y,z)= \left\{ 
\begin{array}{ll}
1 & a \oplus b = 0 \wedge e = a \wedge (x,y,z) \in \{(0,0,0),(1,0,1),(1,1,1),(2,1,2),(2,2,2)\}  \\
1 & a \oplus b = 1 \wedge e = a \wedge (x,y,z) = (0,2,0)  \\
1 & a \oplus b = 0 \wedge (x,y) \in \{(0,0),(1,0),(1,1),(2,1),(2,2)\} \wedge z \neq x  \\
1 & a \oplus b = 1 \wedge (x,y) = (0,2) \wedge z \neq x  \\
1 & (x,y) \in \{(0,1), (1,2), (2,0)\} \\
0 & \text{otherwise}
\end{array}
\right.
\end{equation}
Notably, $G_{g}$ is a complete-support game, i.e., the input distribution $\pi(x,y,z) > 0$ for all triples $(x, y, z) \in \mathsf{X} \times \mathsf{Y} \times \mathsf{Z}$. 

Now, consider a single run of the guessing game in which the no-signalling constraints are relaxed by a parameter $\epsilon \geq 0$. That is, we consider the maximization of the game value by a single-round behavior $P_{\mathsf{A}\mathsf{B}\mathsf{E}|\mathsf{X}\mathsf{Y}\mathsf{Z}}$ that is $\epsilon$-almost-no-signalling. The $\epsilon$-almost-no-signalling strategy in each round is defined as follows. 
\begin{defi}
The (single-round) behavior $\{P_{\mathsf{A}\mathsf{B}\mathsf{E}|\mathsf{X}\mathsf{Y}\mathsf{Z}}(a,b,e|x,y,z)\}$ is said to be an $\epsilon$-almost-no-signalling strategy if it holds that
\begin{eqnarray}
&&\big| \sum_{a} P_{\mathsf{A}\mathsf{B}\mathsf{E}|\mathsf{X}\mathsf{Y}\mathsf{Z}}(a,b,e|x,y,z) - \sum_{a} P_{\mathsf{A}\mathsf{B}\mathsf{E}|\mathsf{X}\mathsf{Y}\mathsf{Z}}(a,b,e|x',y,z) \big| \leq \epsilon, \;\; \forall b,e,y,z, x,x', \nonumber \\
&&\big| \sum_{b} P_{\mathsf{A}\mathsf{B}\mathsf{E}|\mathsf{X}\mathsf{Y}\mathsf{Z}}(a,b,e|x,y,z) - \sum_{a} P_{\mathsf{A}\mathsf{B}\mathsf{E}|\mathsf{X}\mathsf{Y}\mathsf{Z}}(a,b,e|x,y',z) \big| \leq \epsilon, \;\; \forall a,e,x,z, y,y', \nonumber \\
&&\big| \sum_{e} P_{\mathsf{A}\mathsf{B}\mathsf{E}|\mathsf{X}\mathsf{Y}\mathsf{Z}}(a,b,e|x,y,z) - \sum_{a} P_{\mathsf{A}\mathsf{B}\mathsf{E}|\mathsf{X}\mathsf{Y}\mathsf{Z}}(a,b,e|x,y,z') \big| \leq \epsilon, \;\; \forall a,b,x,y, z,z', \nonumber \\
&&\big| \sum_{a,b} P_{\mathsf{A}\mathsf{B}\mathsf{E}|\mathsf{X}\mathsf{Y}\mathsf{Z}}(a,b,e|x,y,z) - \sum_{a,b} P_{\mathsf{A}\mathsf{B}\mathsf{E}|\mathsf{X}\mathsf{Y}\mathsf{Z}}(a,b,e|x',y',z) \big| \leq \epsilon, \;\; \forall e,z,x,y,x',y', \nonumber \\
&&\big| \sum_{a,e} P_{\mathsf{A}\mathsf{B}\mathsf{E}|\mathsf{X}\mathsf{Y}\mathsf{Z}}(a,b,e|x,y,z) - \sum_{a,e} P_{\mathsf{A}\mathsf{B}\mathsf{E}|\mathsf{X}\mathsf{Y}\mathsf{Z}}(a,b,e|x',y,z') \big| \leq \epsilon, \;\; \forall b,y,x,z,x',z', \nonumber \\
&&\big| \sum_{b,e} P_{\mathsf{A}\mathsf{B}\mathsf{E}|\mathsf{X}\mathsf{Y}\mathsf{Z}}(a,b,e|x,y,z) - \sum_{b,e} P_{\mathsf{A}\mathsf{B}\mathsf{E}|\mathsf{X}\mathsf{Y}\mathsf{Z}}(a,b,e|x,y',z') \big| \leq \epsilon, \;\; \forall a,x,y,z,y',z'.
\end{eqnarray}
\end{defi}
In the following lemma, we compute the value of the guessing game corresponding to the chain inequality achievable by single-round behaviors that are $\epsilon$-almost-no-signalling for given $\epsilon \geq 0$ . 

\begin{lem}
\label{lem:chainineq-pert}
Let $\epsilon \geq 0$ be given and let $G_g$ denote the tripartite guessing game corresponding to the chain inequality. For any $\epsilon$-almost-no-signalling behavior $P_{\mathsf{A}\mathsf{B}\mathsf{E}|\mathsf{X}\mathsf{Y}\mathsf{Z}}$ the value of $G_g$ achieved by $P_{\mathsf{A}\mathsf{B}\mathsf{E}|\mathsf{X}\mathsf{Y}\mathsf{Z}}$ is bounded as
\begin{equation}
\label{eq:chainguessgame-value}
\omega^{(P)}(G_g) \leq \left\{ 
\begin{array}{ll}
\frac{1}{9}\left(8 + 10 \epsilon \right) & \; \;  \epsilon \in [0, 1/10] \\
1 & \; \; \epsilon > 1/10
\end{array}
\right.
\end{equation} 
In particular, the no-signalling value of the guessing game $G_{g}$ is $\omega_{\mathcal{NS}}(G_{g}) = 8/9$.
\end{lem}
\begin{proof}
The $\epsilon$-almost-no-signalling value can be calculated by means of a linear program, in particular each of the absolute value constraints is expressed in terms of two linear constraints, and the above statement is proven through the strong duality of linear programming.

Specifically, the dual of the $\epsilon$-almost no-signaling LP for the guessing game $G_g$ can be written as:

\begin{equation}
    \begin{split}
        \max \qquad &\sum_{x,y,z} N_{xyz} +\epsilon\left( \sum_{\substack{x,x'\\ b,e,y,z}}\tau_{beyz}^{xx'}+\sum_{\substack{y,y'\\ a,e,x,z}}\tau_{aexz}^{yy'}+\sum_{\substack{z,z'\\ a,b,x,y}}\tau_{abxy}^{zz'}+\sum_{\substack{x,y,x'y'\\ e,z}}\tau_{ez}^{xyx'y'}+\sum_{\substack{x,z,x',z'\\ b,y}}\tau_{by}^{xzx'z'}+\sum_{\substack{y,z,y',z'\\ a,x}}\tau_{a,x}^{yzy'z'}\right)\\
        \text{s.t.} \qquad & N_{xyz}+\sum_{x'}\left(\tau_{beyz}^{xx'}-\tau_{beyz}^{x'x} \right) + \sum_{y'}\left(\tau_{aexz}^{yy'}-\tau_{aexz}^{y'y} \right) + \sum_{z'}\left(\tau_{abxy}^{zz'}-\tau_{abxy}^{z'z} \right) +\sum_{x',y'}\left(\tau_{ez}^{xyx'y'}-\tau_{ez}^{x'y'xy} \right) \\
        & +\sum_{x',z'}\left(\tau_{by}^{xzx'z'}-\tau_{by}^{x'z'xz} \right) +\sum_{y',z'}\left(\tau_{ax}^{yzy'z'}-\tau_{ax}^{y'z'yz} \right) \geq \frac{V_g(a,b,e,x,y,z)}{\pi(x,y,z)},\qquad \forall a,b,e,x,y,z,\\
        &  \tau_{beyz}^{xx'}\geq0, \qquad  \forall b,e,y,z,x,x',\\
        & \tau_{aexz}^{yy'}\geq0, \qquad \forall a,e,x,z,y,y',\\
        & \tau_{abxy}^{zz'}\geq0, \qquad \forall a,b,x,y,z,z',\\
        & \tau_{ez}^{xyx'y'}\geq0, \qquad \forall e,z,x,y,x',y',\\
        & \tau_{by}^{xzx'z'}\geq0, \qquad \forall b,y,x,z,x',z',\\
        & \tau_{ax}^{yzy'z'}\geq0, \qquad \forall a,x,y,z,y',z'.\\
    \end{split}
\end{equation}
where $N_{xyz}$ is the dual variable associated with the normalization condition for the setting $x,y,z$, and the variables $\tau$s are the dual variables associated with the $\epsilon$-almost no-signaling conditions. Solving the above linear program and using strong duality gives Eq.\eqref{eq:chainguessgame-value}. 
The no-signalling value of $G_g$ can be obtained from the solution by setting $\epsilon = 0$.
\end{proof}
While we have computed the $\epsilon$-almost-no-signalling value for the chain game, in \cite{BFS14, Hol09} it was proven, based on the sensitivity analysis of linear programs from \cite{Schrijver98}, that for for all games the relaxation of the constraints in the linear program for computation of no-signalling value only perturbs the value by $\alpha(G) \cdot \epsilon$ for $\alpha(G)$ depending upon the number of inputs and outputs in the game. To elaborate, the following lemma shows how the no-signaling game value changes under epsilonic amounts of signalling.
\begin{lem}[\cite{BFS14, Hol09}]
\label{lem:almost-no-signaling}
Let $G$ be a $3$-player game with no-signalling value $\omega_{\mathcal{NS}}(G)$. Then there exists a constant $\alpha(G)$ depending only upon the number of inputs and outputs of $G$, such that for any $\epsilon > 0$, and for any $\epsilon$-almost-no-signalling behavior $\{P_{\mathsf{A}\mathsf{B}\mathsf{E}|\mathsf{X}\mathsf{Y}\mathsf{Z}}(a,b,e|x,y,z)\}$, the value of $G$ achieved by $\{P_{\mathsf{A}\mathsf{B}\mathsf{E}|\mathsf{X}\mathsf{Y}\mathsf{Z}}(a,b,e|x,y,z)\}$ is bounded as
\begin{equation}
\label{eq:almost-no-signaling}
\omega^{\{P\}}(G) \leq \omega_{\mathcal{NS}}(G) + \alpha(G) \cdot \epsilon.
\end{equation}
\end{lem}

Now, consider the $t$-out-of-$n$ Parallel Repetition of the game $G_{g}$ denoted as $G_{g}^{t/n}$ \cite{BFS14}. In the game $G_{g}^{t/n}$, the players receive inputs $\textbf{x}, \textbf{y}, \textbf{z}$ respectively where $\textbf{x} = \left(x_1, \ldots, x_n \right)$ and similarly for the other inputs. The inputs are independent between all the $n$ rounds, so that the input distribution $\pi^n$ in the game is given by $\pi^n(\textbf{x}, \textbf{y}, \textbf{z}) = \pi(x,y,z)^n$. The players produce outputs $\textbf{a}, \textbf{b}, \textbf{e}$ respectively where again $\textbf{a}^n = \left(a_1, \ldots, a_n \right)$ and similarly for the other outputs. The winning condition in the game is given by the predicate $V^{t/n}_{g}(\textbf{a}, \textbf{b}, \textbf{e}, \textbf{x}, \textbf{y}, \textbf{z})$ defined as
\begin{equation}
V^{t/n}_{g}(\textbf{a}, \textbf{b}, \textbf{e}, \textbf{x}, \textbf{y}, \textbf{z})= \left\{ 
\begin{array}{ll}
1 & \; \; \sum_{i=i}^n V_{g}(a_i, b_i, e_i, x_i, y_i, z_i) \geq t\\
0 & \; \; \text{otherwise}
\end{array}
\right.
\end{equation}
The usual parallel repetition considers the scenario when $t=n$. In \cite{BFS14}, the following concentration theorem was proven.

\begin{thm}[\cite{BFS14}]
\label{thm-t-out-of-n-parallelrep}
Let $G$ be an arbitrary $m$-player game with complete support. Then there exists a constant $\mu > 0$ depending on $G$ such that for any $\delta > 0$, any $n \in \mathbb{N}$ and for $t = \left( \omega_{\mathcal{NS}}(G) + \delta \right) n$, it holds that
\begin{equation}
\omega_{\mathcal{NS}}(G^{t/n}) \leq 8 \exp\left(- \delta^4 \mu n \right).
\end{equation}
\end{thm}

Applying Theorem \ref{thm-t-out-of-n-parallelrep} to the guessing game $G_{g}$ corresponding to the chain expression we have that $\omega_{\mathcal{NS}}(G_{g}^{t/n}) \leq 8 \exp\left(- \delta^4 \mu n \right)$. 
Equivalently that for any behavior $P_{\mathsf{A}^n\mathsf{B}^n\mathsf{E}^n|\mathsf{X}^n\mathsf{Y}^n\mathsf{Z}^n}$ that is non-signalling of the $\mathcal{ABNS}$ type (such that the outputs of any subset of players do not depend upon the inputs of the complementary set but that the outputs of one player can depend upon any of their inputs), and for any subset $T \subset [n]$ of size $|T| \geq t$ it holds that
\begin{eqnarray}
\label{eq:guess-conc}
\text{Pr}\left[ \bigwedge_{i \in T} V_{g}(a_i,b_i,e_i, x_i, y_i, z_i) = 1 \right] \leq 8 e^{- \delta^4 \mu n}.
\end{eqnarray}
Specifically for the $m=3$ chain inequality under consideration, we can compute the constants using Lemma \ref{lem:chainineq-pert} and following the proof of Thm. \ref{thm-t-out-of-n-parallelrep} from \cite{BFS14} as
\begin{eqnarray}
\mu = \frac{\pi_{\min}^2}{\alpha(G_g)^2 \cdot 6^7} = \frac{1}{6^9 \cdot 5^2}, \quad \delta < \frac{1}{10},
\end{eqnarray}
where we have used that the minimum probability of the inputs in the complete-support guessing game $G_g$ is $\pi_{\min} = \frac{1}{3^3}$.
Let us define a predicate $W(a,e,x,z)$ as
\begin{equation}
W(a,e,x,z)= \left\{ 
\begin{array}{ll}
1 & \; \; a =e \wedge x = z\\
1 & \; \; x \neq z \\
0 & \; \; a \neq e \wedge x = z
\end{array}
\right.
\end{equation}
Then by definition we have that
\begin{equation}
\label{eq:guess-conjunction}
V_g(a_i, b_i, e_i, x_i, y_i, z_i) = V(a_i, b_i, x_i, y_i) \cdot W(a_i, e_i, x_i, z_i). 
\end{equation}
Now consider that in a device-independent protocol, Alice and Bob test for the violation of the chain inequality in $n$ runs of the protocol. Specifically, they test that their device produces winning outputs in $t$ out of $n$ runs, and abort the protocol if the test fails. With the given condition that the marginal behavior of Alice and Bob has true Bell value $\omega^*$ in each of the $n$ rounds, a standard concentration argument via the Azuma-Hoeffding inequality gives that the fraction of rounds in which Alice and Bob observe winning outcomes for the chain inequality is $\geq \omega^* - \kappa$ with probability at least $1 - 2 \exp\left(- n \kappa^2/2 \right)$.  


That is, defining the event $\mathtt{ABORT}$ as 
\begin{equation}
\mathtt{ABORT} := \bigg\{ (\textbf{a},\textbf{b}, \textbf{x}, \textbf{y}) \bigg| \sum_{i=1}^n V(a_i,b_i, x_i, y_i) < t \bigg\},
\end{equation}
and under the above assumption on the marginal behavior of Alice-Bob we have by  the Azuma-Hoeffding inequality that for $t = (\omega^* - \kappa)n$,  
\begin{eqnarray}
\label{eq:prob-abort}
\text{Pr}\left[\mathtt{ABORT} \right] \leq 2 e^{-n \kappa^2/ 2}.
\end{eqnarray}
Here, we note that since $t = (\omega^* - \kappa)n = (\omega_{\mathcal{NS}}(G_g) + \delta)n$ from Thm. \ref{thm-t-out-of-n-parallelrep}, and with $\omega_{\mathcal{NS}}(G_g) = 8/9 \approx 0.889$ from Lemma \ref{lem:chainineq-pert} and the maximum quantum value of the chain game with predicate \eqref{eq:pred-chaingame} $\omega_{\mathcal{NS}}(G) = (4+\sqrt{3})/6 \approx 0.955$ from the strategy given by the following measurements \cite{SASA16} on the maximally entangled state $|\phi_2 \rangle = \frac{1}{\sqrt{2}} \left(|0\rangle_A |0 \rangle_B + |1 \rangle_A |1 \rangle_B \right)$
\begin{eqnarray}
A_x = \sin(\theta_x) \sigma_x + \cos(\theta_x) \sigma_z, \nonumber \\
B_y = \sin(\phi_y) \sigma_x + \cos(\phi_y) \sigma_z,
\end{eqnarray} 
with $\theta_x =  x \cdot \pi/3$ and $\phi_y = (2y+1) \cdot\pi/6$ for $x,y = 0,1,2$, we can set
\begin{equation}
\omega^* = \kappa + \delta + \frac{8}{9} \leq \frac{4+\sqrt{3}}{6},
\end{equation}
for suitably small choice of $\kappa > 0$ in the Azuma-Hoeffding inequality, and  $\delta > 0$ in Thm. \ref{thm-t-out-of-n-parallelrep}. 

Thus far we have, under the constraint that the marginal behaviors achieve value $\omega^*$ in each of the $n$ rounds, that for some subset $T \subset [n]$ of size $|T| \geq t$ with probability at least $1 - 2 \exp\left(-n \kappa^2/ 2\right)$, it holds that $\bigwedge_{i \in T} V(a_i,b_i, x_i, y_i) = 1$. Equivalently with $\overline{\mathtt{ABORT}}$ denoting the event that Alice and Bob do not abort the protocol, we have that $\text{Pr}\left[\overline{\mathtt{ABORT}} \right] \geq 1 - 2 \exp\left(-n \kappa^2/ 2\right)$.

Combining with Eq.\eqref{eq:guess-conjunction}, we obtain that
\begin{equation}
\label{eq:guess-given-abort}
\text{Pr}\left[ \bigwedge_{i \in T} W(a_i,e_i, x_i, z_i) = 1 \;  \bigg| \;  \overline{\mathtt{ABORT}} \right] \leq  \frac{8 e^{- \delta^4 \mu n}}{\text{Pr}\left[\overline{\mathtt{ABORT}}\right]},
\end{equation}
with $\text{Pr}\left[\overline{\mathtt{ABORT}} \right] \geq 1 - 2 \exp\left(-n \kappa^2/ 2\right)$.



Now, define $\mathtt{Guess-Input}$ as the particular subset of $T$ (the set of winning runs) in which Eve chooses an input $z_i$ equal to Alice's input $x_i$. That is 
\begin{equation}
\mathtt{Guess-Input} := \big\{ i \in T \; | \; z_i = x_i \big\}. 
\end{equation}
Define a random variable $H_i$ taking value $1$ when $z_i \neq x_i$ and $0$ otherwise. Since the inputs are chosen independently in each round so that the probability of $z_i \neq x_i$ is $2/3$, we have by the Generalised Chernoff bound \cite{PS97} that
\begin{equation}
\text{Pr}\left[ \sum_{i=1}^n H_i \geq \gamma t \right] \leq e^{- 2 t \left(\gamma - 2/3 \right)^2 },
\end{equation}
for $2/3 < \gamma < 1$ or equivalently $\text{Pr}\left[ \sum_{i=1}^n H_i < \gamma t \right] \geq 1 - e^{- 2 t \left(\gamma - 2/3 \right)^2 }$. 
Here, we have used the following Generalised Chernoff bound Eq.\eqref{eq:gen-Chernoff} from \cite{PS97}.
\begin{thm}[\cite{PS97}]
Let $X_i$ for $i \in \{1,\ldots, t\}$ be Boolean random variables such that for some $0 \leq \zeta \leq 1$ we have that for every subset $S \subseteq \{1,\ldots, t\}$, $\text{Pr}\left[\wedge_{i \in S} X_i = 1 \right] \leq \zeta^{|S|}$. Then for any $0 \leq \zeta \leq \gamma \leq 1$ it holds that
\begin{equation}
\label{eq:gen-Chernoff}
\text{Pr}\left[\sum_{i=1}^t X_i \geq \gamma t \right] \leq e^{-t D(\gamma \| \zeta)},
\end{equation}
where $D(\cdot \| \cdot)$ denotes the relative entropy function. In particular, $D(\gamma \| \zeta) \geq 2\left(\gamma - \zeta \right)^2$.
\end{thm}

In other words, we have that with probability at least $1 - e^{- 2 t \left(\gamma - 2/3 \right)^2 }$, the number of runs in which $z_i = x_i$ (the size of $\mathtt{Guess-Input}$) is at least $(1- \gamma)t$. 
We thus obtain
\begin{equation}
\label{eq:guess-given-abort-2}
\text{Pr}\left[ \bigwedge_{i \in \mathtt{Guess-Input}} \left(e_i = a_i \right) \;  \bigg| \;  \overline{\mathtt{ABORT}} \right] \leq  \frac{8 e^{- \delta^4 \mu n}}{\text{Pr}\left[\overline{\mathtt{ABORT}}\right]},
\end{equation}
with $\text{Pr}\left[\overline{\mathtt{ABORT}} \right] \geq 1 - 2 e^{-n \kappa^2/ 2}$ and where $|\mathtt{Guess-Input}| \geq (1 - \gamma) t$ with probability at least $1 - e^{- 2 t \left(\gamma - 2/3 \right)^2 }$.

Now, we observe that the probability $P_g(\textbf{x}^*, \omega^*)$ that Eve guesses Alice's output in all $n$ runs is upper bounded by the quantity on the left in Eq.\eqref{eq:guess-given-abort-2} in the causal scenario $\mathcal{ABNS}$. Specifically in this causal scenario, Eve inputs a single $z^*$ into her device and attempts to output $e = \textbf{a}$ as seen in the guessing probability $P_g(\textbf{x}^*, \omega^*)$ definition in Eq.\eqref{eq:TONS-guessprob}. We can parse the output $e$ as the string $(e_1, e_2, \ldots, e_n)$ where Eve hopes to achieve $e_i = a_i$ for all $i = 1, \ldots, n$. Therefore, the probability 
$P_g(\textbf{x}^*, \omega^*)$ under the condition that the marginal behavior of Alice and Bob's device has value $\omega^*$ in Eq.\eqref{eq:TONS-guessprob} in the $\mathcal{ABNS}$ scenario  and consequently also in the $\mathcal{TONS}$ setting is upper bounded by the right hand side of Eq.\eqref{eq:guess-given-abort-2}. We thus conclude that the guessing probability in the $\mathcal{TONS}$ scenario goes down exponentially as $P_g(\textbf{x}^*, \omega^*) \leq 24 e^{-\delta^4 \mu n}$.

\end{proof}


\section{No-Signaling strategies allowing Eve to Perfectly Guess Local Outcomes in Bipartite Pseudotelepathy games}
\label{sec:BoundRandomness-NS}

As we have seen in the previous section, certain inequalities such as the chain inequality exhibit a phenomenon of monogamy of non-locality in no-signalling theories, which implies that a no-signalling Eve is unable to perfectly guess Alice's outcomes when Alice and Bob observe maximum violation of the chain inequality. This phenomenon is thus characterised by a gap between the no-signalling value $\omega_{\mathcal{NS}}(G)$ of the bipartite game and the no-signalling value of its tripartite guessing game counterpart which we consider in the proof of Theorem \ref{thm:guess-prob-tons-2} (the notion of a guessing game was introduced in \cite{V17}). This gap is the non-locality analog of the monogamy of entanglement which has been used as the basis of security proofs in the quantum setting, see for example \cite{CMT25, V17, JMS20, JK21}. For games which display this monogamy property, the exponential decrease of guessing probability with $n$ in a $\mathcal{TONS}$ setting was proven in the previous section. 

In this Appendix, we show that this monogamy-of-nonlocality does not hold for most bipartite non-local pseudotelepathy games. As such, unlike in the case of the quantum adversary \cite{ZMZ+23}, these games are unsuitable and should not be used as the basis for device-independent randomness certification or key generation against no-signalling adversaries. 

Pseudo-telepathy (PT) games \cite{BCT99} are distributed tasks that can be perfectly achieved with shared quantum - but not classical - information. Specifically, two distant parties who do not communicate but are allowed to share a certain entangled quantum state can satisfy a deterministic condition on their mutual input-output behavior with certainty, where parties without shared entanglement cannot do so. PT games are interesting from a foundational point of view as a simple `all-versus-nothing' proof of quantum non-locality. In recent years, these games have also found applications as necessary resources in protocols of device-independent randomness amplification \cite{CR12, our16}, in demonstrating quantum computational advantage for shallow circuits \cite{BGK18} and in the proof of MIP* = RE \cite{JNVWY21}. A connection between KS vector sets and PT games has been found by Renner and Wolf in \cite{RW04}. Specifically, every so-called 'weak' KS set leads to a PT game and every PT game in which the optimal quantum strategy uses a maximally entangled state leads to a weak KS set. Formally, PT games and weak KS sets are defined as follows.
\begin{defi}[\cite{RW04}]
Let $|\psi \rangle \in \mathcal{H}_1 \otimes \mathcal{H}_2$ be a pure state. A pseudo-telepathy (PT) game with respect to $|\psi \rangle$ is a pair $(\mathsf{X}, \mathsf{Y})$ where $\mathsf{X}$ ($\mathsf{Y}$) is a set of orthonormal bases with respect to $\mathcal{H}_1$ ($\mathcal{H}_2$) such that the following holds. Let $\bar{V}$ be the following function defined on $\mathsf{X} \times \mathsf{Y}$: $\bar{V}(\mathsf{x}, \mathsf{y})$ is the set of pairs $(u,w) \in \mathsf{x} \times \mathsf{y}$ satisfying 
\begin{equation}
\langle \psi | u, w \rangle \neq 0,
\end{equation}
i.e., the outcomes $(u,w)$ have non-zero probability of occurring when $|\psi \rangle$ is measured in the basis $\mathsf{x} \times \mathsf{y}$. Then we must have that for every pair of functions $(f_1, f_2)$ (a classical strategy) where $f_1$ ($f_2$) is defined on $\mathsf{X}$ ($\mathsf{Y}$) and $f_1(\mathsf{x}) \in \mathsf{x}$, $f_2(\mathsf{y}) \in \mathsf{y}$ holds for every $\mathsf{x}, \mathsf{y}$ there must exist at least one basis $\mathsf{x}, \mathsf{y}$ such that 
\begin{equation}
(f_1(\mathsf{x}), f_2(\mathsf{y})) \notin \bar{V}(\mathsf{x}, \mathsf{y})
\end{equation}
holds. 
\end{defi}
\begin{defi}[\cite{RW04}]
A weak Kochen Specker set in $\mathbb{C}^d$ is a set $S \subset \mathbb{C}^d$ of unit vectors such that for every function $f: S \rightarrow \{0,1\}$ satisfying that for every orthonormal basis $\mathsf{x} \subset S$ of $\mathbb{C}^d$ 
\begin{equation}
\sum_{v \in \mathsf{x}} f(v) = 1
\end{equation}
holds, there exist vectors $(u,w) \in S$ with $\langle u|w \rangle = 0$ and $f(u) = f(w) = 1$.
\end{defi}

The paradigmatic example of a PT game is the famous Magic Square game $G_{MS}$. An interesting property of this game was found in \cite{ACP+16, RH14} with respect to device-independent randomness certification against no-signalling adversaries, namely that even achieving the maximum value (of $1$) for this game is not sufficient to certify device-independent randomness against no-signalling adversaries. This phenomenon of \textit{bound randomness} therefore indicates that the monogamy of non-locality in some cases does not hold in no-signalling theories. That is, unlike the case of the chain inequality explored in the proof of Theorem \ref{thm:guess-prob-tons-2}, the violation of the Bell Inequality corresponding to the Magic Square game cannot be used in the tasks of DI randomness expansion or amplification against no-signalling adversaries. In this section, we further explore this question of randomness certification against no-signalling adversaries. We prove that the phenomenon of bound randomness against NS adversaries is not unique to the Magic Square game, and in fact is true for most PT games which satisfy a couple of assumptions that we detail below.

\begin{itemize}
\item \textit{Assumption 1.-} The first assumption in this section is that the PT game is with respect to the maximally entangled state $|\phi_d \rangle = \frac{1}{\sqrt{d}} \sum_{i=0}^{d-1} |i \rangle_A |i \rangle_B$ for some $d \geq 3$. 

This is not a very strict assumption, since all known bipartite PT games are won by maximally entangled states of local dimension $d \geq 3$. In fact, it is believed that the maximally entangled state is necessary to achieve quantum pseudo-telepathy and it is a well-known open problem to identify a PT game that is won by a non-maximally entangled state \cite{Mancinska14}. 

\item \textit{Assumption 2.-} Let $S$ denote the weak KS set (with orthogonality graph $\Gamma$) corresponding to the PT game. The second assumption is that the set of local quantum contextual correlations for S, which is given by the Lovasz-theta set $TH(\Gamma)$ is contained within the intersection of the fractional stable set polytope of $\Gamma$ and the hyperplanes corresponding to the normalisation conditions for each maximal clique in $\Gamma$. 

This is a more technical assumption that is nevertheless true of known KS sets. The idea is the following - each KS set can be represented by its orthogonality graph where each vertex $v$ represents a vector $|v\rangle$ and two vertices are connected by an edge $v \sim w$ if the corresponding vectors are orthogonal $\langle v | w \rangle = 0$. The set of contextual quantum correlations corresponds to the Lovasz-theta set of this graph denoted as $TH(\Gamma)$. Formally, for a graph $\Gamma = (V_{\Gamma}, E_{\Gamma})$, the fractional stable set polytope $FSTAB(\Gamma)$ is defined as\cite{OP23}:
\[\text{FSTAB}(\Gamma) = \left\{ \overline{x} \in \mathbb{R}^{|V_{\Gamma}|} \,\middle|\, x_i+x_j \leq 1 \text{ for all } (i,j) \in E_{\Gamma} \right\}.
\]
It is well-known in graph theory that the Lovasz-theta set is contained within this larger set $FSTAB(\Gamma)$.
Our second assumption is that this holds true even when we consider the subset of points within $FSTAB(\Gamma)$ that obey the normalisation constraint that the sum of probabilities in every basis is $1$. 
\end{itemize}

We now state and prove the main theorem of this Appendix.
\begin{thm}
\label{thm:PT-guessinggame}
Let $G$ denote a bipartite pseudotelepathy game with respect to maximally entangled state $|\phi_d \rangle = \frac{1}{\sqrt{d}} \sum_{i=0}^{d-1} | i \rangle_A |i \rangle_B$ for some $d \geq 3$ and with the corresponding weak KS set satisfying Assumption 2. Then  for any input $x^*$ of one player, there exists a strategy that allows a no-signalling adversary to perfectly guess the outputs of $x^*$ even upon observing maximum violation, i.e., the single-round guessing probability 
\begin{equation}
P_g(x^*, \omega_{\mathcal{NS}}(G)) = 1.
\end{equation}
\end{thm}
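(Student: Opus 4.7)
The plan is to construct, for each output $a^* \in \{0, 1, \ldots, d-1\}$ that Alice might produce on input $x^*$, a no-signalling bipartite behavior $P^{(a^*)}_{\mathsf{A}\mathsf{B}|\mathsf{X}\mathsf{Y}}$ that (i) wins the pseudotelepathy game $G$ with probability one, and (ii) produces output $a^*$ deterministically whenever Alice inputs $x^*$. Eve's attack is then the convex decomposition $\sum_{a^*} P_{\mathsf{E}|\mathsf{Z}}(a^*|x^*) \cdot P^{(a^*)}$: on input $z^* = x^*$ she samples $a^*$, publishes $e = a^*$ as her guess, and plays $P^{(a^*)}$ on the honest devices. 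The Alice--Bob marginal is a convex combination of perfect winners and therefore itself achieves $\omega_{\mathcal{NS}}(G) = 1$; each branch is deterministic on $x^*$, so $e = a$ always and $P_g(x^*, \omega_{\mathcal{NS}}(G)) = 1$.

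To build $P^{(a^*)}$, I would work on the orthogonality graph $\Gamma$ of the weak KS set $S$. Writing $|v^*\rangle$ for the vector labelled by $(x^*, a^*)$, I construct a vertex-probability vector $\mathbf{p}^{(a^*)} \in [0,1]^{|V_\Gamma|}$ with $p^{(a^*)}_{v^*} = 1$, $p^{(a^*)}_v = 0$ for every $v \sim v^*$, and the remaining entries chosen to lie in the polytope $\mathrm{FSTAB}(\Gamma) \cap \{\mathbf{p} : \sum_{v\in\mathsf{x}} p_v = 1 \text{ for every basis } \mathsf{x}\}$. Assumption 2 says this polytope contains $TH(\Gamma)$, and for $d \geq 3$ the slack in the FSTAB inequalities provides enough room to sharpen one coordinate to $1$ while still satisfying all basis normalisations --- a feasibility check on a linear system that is known to fail for $d=2$ (recovering the monogamy of CHSH) and to succeed under Assumption 2. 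The vector $\mathbf{p}^{(a^*)}$ is then lifted to a bipartite behavior by choosing, for each pair of bases $(\mathsf{x}, \mathsf{y})$, a coupling on $\mathsf{x} \times \mathsf{y}$ whose marginals are $p^{(a^*)}_v$ and $p^{(a^*)}_w$ and whose support lies in the winning set $\{(v,w) : \langle v^* | w\rangle \neq 0\}$, which is exactly $\{(v,w) : \langle\phi_d | v, w\rangle \neq 0\}$. The PT predicate is then always satisfied, and sharing the one-body marginals across input pairs yields the no-signalling condition between Alice and Bob.

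The hard step will be feasibility of these couplings: for every $(\mathsf{x}, \mathsf{y})$ one must exhibit a transportation plan with marginals $p^{(a^*)}_v$, $p^{(a^*)}_w$ and support avoiding orthogonal pairs. By max-flow / min-cut duality, this reduces to a family of Hall-type inequalities $\sum_{v \in T} p^{(a^*)}_v \leq \sum_{w \in N(T)} p^{(a^*)}_w$ for every $T \subseteq \mathsf{x}$, where $N(T) \subseteq \mathsf{y}$ is the set of vectors not orthogonal to all of $T$. The FSTAB edge inequalities $p_u + p_w \leq 1$ combined with the basis normalisation $\sum_{w \in \mathsf{y}} p^{(a^*)}_w = 1$ --- precisely the content of Assumption 2 --- are what certifies these Hall conditions. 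Once feasibility is secured, no-signalling between Eve and the honest devices is automatic from the classical-mixture structure of the attack (Eve's output is drawn from a classical distribution independent of Alice's and Bob's devices given the branch label), and the theorem follows.
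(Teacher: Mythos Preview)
Your high-level architecture is exactly the paper's: for each outcome $a^*$ of input $x^*$, build a bipartite no-signalling behavior that wins $G$ perfectly and is deterministic on $(x^*,a^*)$, then let Eve sample from the resulting convex decomposition. The gap is in the two technical steps you leave as assertions.

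For the existence of the marginal vector $\mathbf{p}^{(a^*)}$ with $p^{(a^*)}_{v^*}=1$, saying that ``the slack in the FSTAB inequalities provides enough room to sharpen one coordinate to $1$'' is not an argument, and it is here that Assumption~2 actually does its work. The paper's mechanism is concrete: the quantum assignment $f_q(|w\rangle)=|\langle v^*|w\rangle|^2$ (from the pure state $|v^*\rangle\langle v^*|$) lies in $TH(\Gamma)$ and hence, by Assumption~2, in $\mathrm{FSTAB}(\Gamma)$ intersected with the normalisation hyperplanes, with $f_q(|v^*\rangle)=1$ already. The Balinski--Nemhauser--Trotter theorem then guarantees that every vertex of $\mathrm{FSTAB}(\Gamma)$ is $\{0,\tfrac12,1\}$-valued; since the value $1$ at $v^*$ is preserved under convex combination only if every extreme point in the decomposition of $f_q$ also assigns $1$ to $v^*$, one extracts a $\{0,\tfrac12,1\}$-valued assignment $f$ with $f(|v^*\rangle)=1$. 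You never mention this half-integrality result, and without it your ``sharpening'' step has no justification. The same discrete structure is what the paper uses for the lifting: rather than appeal to Hall-type transportation feasibility (which you state but do not derive from the FSTAB edge inequalities plus normalisation), it writes the joint distribution explicitly via the conditional $P(\mathsf{u}\mid \mathsf{w},\mathsf{x})$ dictated by orthogonality and checks the marginals directly against $f$. Your max-flow route is plausible in spirit, but the Hall inequalities do not obviously follow for a generic point of the polytope, so as written the coupling step is also unproved.
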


\begin{proof}
Consider the bipartite PT game $G$ with respect to $|\phi_d \rangle$ consisting of sets of orthonormal bases $(\mathsf{X}, \mathsf{Y})$ of $\mathbb{C}^d$. By Theorem 5 in \cite{RW04}, we have that 
\begin{equation}
S = \bigcup_{\mathsf{x} \in \mathsf{X}} \mathsf{x} \cup \bigcup_{\mathsf{y} \in \mathsf{Y}} \mathsf{y}
\end{equation} 
is a weak KS set in $\mathbb{C}^d$. Furthermore, the predicate $V$ defining the game is characterised by the set of zero probability events. That is, for each $(\mathsf{x}, \mathsf{y}) \in \mathsf{X} \times \mathsf{Y}$ and for every pair $|u \rangle \in \mathsf{x}, |w \rangle \in \mathsf{y}$ such that $\langle u|w \rangle = 0$ we have that $V(\mathsf{u}, \mathsf{w}, \mathsf{x}, \mathsf{y}) = 0$ and is $1$ otherwise. We first consider an augmented version $\tilde{G}$ of the game in which both players measure all the bases in the weak KS set $S$ with the winning condition again defined by the set of non-zero probability events. More precisely, the input alphabets in $\tilde{G}$ are given as $\tilde{\mathsf{X}} = \tilde{\mathsf{Y}} = S$ and $V(\mathsf{u}, \mathsf{w}, \mathsf{x}, \mathsf{y}) = 0$ for every pair $\mathsf{u}, \mathsf{w} \in S$ satisfying $\langle u|w \rangle = 0$. We will prove that an attack strategy corresponding to a no-signalling behavior exists allowing Eve to guess Alice's outcomes in the game $\tilde{G}$ while at the same Alice-Bob satisfy all the winning conditions in $\tilde{G}$, which will automatically imply the same for the sub-game $G$. 

We first invoke a foundational result from polyhedral combinatorics that plays a crucial role in our argument. This result characterizes the extreme points of the fractional stable set polytope and ensures all extremal points have probability values in the set $\{0, \frac{1}{2}, 1\}$. It ensures that any vertex (i.e., extremal point) of the fractional stable set polytope is a $\{0,\frac{1}{2},1\}$-valued.

\begin{thm}[Balinski\cite{BA65}, Nemhauser and Trotter\cite{NT75}]
    \label{thm:balinski}
    Let $\overline{x} \in \mathbb{R}^{V_{\Gamma}}$ be a vertex of $\operatorname{FSTAB}(\Gamma)$. Then for every vertex $v \in V_{\Gamma}$, it holds that $x_v \in \{0,\frac{1}{2},1\}$, meaning that every vertex is $\{0,\frac{1}{2},1\}$-value assigned.
\end{thm}

This result plays a central role in our construction of no-signaling strategies. In particular, the $\{0, \frac{1}{2}, 1\}$-valued assignments described above also characterise the extreme points of the intersection of $FSTAB(\Gamma)$ with the normalisation hyperplanes (since these are supporting hyperplanes of $FSTAB(\Gamma)$). We use this result to first show the existence of a local consistent behavior with entries in $\{0, \frac{1}{2}, 1\}$ assigning value $1$ to any particular projector measured by Alice.

\begin{lem}
\label{lem:localf-strat}
Let $S \subset \mathbb{C}^d$ be a Kochen–Specker (KS) set satisfying Assumption 2.
Then, for any $|v\> \in S$, there exists a valid assignment $f : S \to \{0, \frac{1}{2}, 1\}$ such that $f(|v\>) = 1$.
\end{lem}

\begin{proof}
Let $S$ be a KS set in $\mathbb{C}^d$, where $d\geq 3$, and let $|v\> \in S$ be arbitrary. The quantum assignment $f_q$ of a KS set $S \in \mathbb{C}^d$ with $d \geq 3$ is defined for quantum state $\rho$ as $f_q(|w\>):=\Tr(\rho |w\>\<w|)$. For any $|v\> \in S$, considering the state $\rho = |v\>\<v|$ results in an assignment $f_q$ with $f_q(|v\>)= \langle \psi|P_v| \psi \rangle = |\langle v|v \rangle|^2 =1$.


By Thm. \ref{thm:balinski}, every vertex of FSTAB is $\{0, \tfrac{1}{2}, 1\}$-valued, and any point in FSTAB is a convex combination of such extremal assignments. Since $f_q(|v\rangle) = 1$ for the quantum assignment which by Assumption 2 lies within the intersection of FSTAB and the normalisation hyperplanes, and this value is preserved under convex combinations only when all contributing vertices satisfy $f(|v\rangle) = 1$, there must exist at least one extremal assignment $f: S \rightarrow \{0,\frac{1}{2},1\}$ such that $f(|v\rangle)=1$.

\end{proof}

In the following, we construct a no-signaling behavior for the Alice-Bob system, i.e., $\{P_{\mathsf{U}\mathsf{W}|\mathsf{X}\mathsf{Y}}(\mathsf{u}, \mathsf{w} \mid \mathsf{x}, \mathsf{y})\}$ that achieves a winning probability of $1$ for the PT game $\tilde{G}$, from any valid assignment $f: S \rightarrow \{0, \frac{1}{2}, 1\}$.

\begin{lem}
\label{lem:bipNS-from-localf}
Let $f: S \to \{0, \frac{1}{2}, 1\}$ be a valid assignment on a KS set $S \subset \mathbb{C}^d$. Then, there exists a bipartite no-signaling behavior $\{P^{(f)}_{\mathsf{U}\mathsf{W}|\mathsf{X}\mathsf{Y}}(\mathsf{u}, \mathsf{w} \mid \mathsf{x}, \mathsf{y})\}$ with marginals $P^{(f)}_{\mathsf{W}|\mathsf{Y}}(\mathsf{w}|\mathsf{y}) = f(|w\rangle)$ and $P^{(f)}_{\mathsf{U}|\mathsf{X}}(\mathsf{u}|\mathsf{x}) = f(|u\rangle)$. Furthermore, for any pair $|u\>, |w\> \in S$ with $\langle u | w \rangle = 0$, we have $P^{(f)}_{\mathsf{U}\mathsf{W}|\mathsf{X}\mathsf{Y}}(\mathsf{u}, \mathsf{w} \mid \mathsf{x}, \mathsf{y}) = 0$ for measurement bases $\mathsf{x}, \mathsf{y}$ with $\mathsf{u} \in \mathsf{x}$ and $\mathsf{w} \in \mathsf{y}$.
\end{lem}


\begin{proof}
Construct the bipartite behavior $\{P^{(f)}_{\mathsf{U}\mathsf{W}|\mathsf{X}\mathsf{Y}}(\mathsf{u}, \mathsf{w} \mid \mathsf{x}, \mathsf{y})\}$ as 
\begin{equation}
    P^{(f)}_{\mathsf{U}\mathsf{W}|\mathsf{X}\mathsf{Y}}(\mathsf{u}, \mathsf{w} \mid \mathsf{x}, \mathsf{y}) = P^{(f)}_{\mathsf{U}|\mathsf{W}\mathsf{X}\mathsf{Y}}(\mathsf{u} \mid \mathsf{w}, \mathsf{x}, \mathsf{y}) \cdot P^{(f)}_{\mathsf{W}|\mathsf{X}\mathsf{Y}}(\mathsf{w} \mid \mathsf{x}, \mathsf{y}).
\end{equation}
with the terms being defined as 
\begin{equation}
P^{(f)}_{\mathsf{W}|\mathsf{X}\mathsf{Y}}(\mathsf{w} \mid \mathsf{x}, \mathsf{y}) := f(|w\>) \; \; \; \; \forall \mathsf{w}, \mathsf{x}, \mathsf{y},
\end{equation}
and
\begin{equation}
P^{(f)}_{\mathsf{U}|\mathsf{W}\mathsf{X}\mathsf{Y}}(\mathsf{u} \mid \mathsf{w}, \mathsf{x}, \mathsf{y}) = \left\{ 
\begin{array}{ll}
0 & \; \; \langle u | w \rangle = 0\\
1 & \; \; \text{otherwise} \\
\end{array}
\right. \; \;  \; \forall \mathsf{x}, \mathsf{y} \; \text{with} \; \mathsf{u} \in \mathsf{x} \; \text{and} \; \mathsf{w} \in \mathsf{y}.
\end{equation}
Note that for a given $\mathsf{x}, \mathsf{y}$ and $\mathsf{w} \in \mathsf{y}$, it is impossible that for all $\mathsf{u} \in \mathsf{x}$ the corresponding vectors $|u\>$ are all orthogonal to $|w\>$ in a Hilbert space of dimension $d$. Therefore, there must exist some $\mathsf{u'} \in \mathsf{x}$ such that $P^{(f)}_{\mathsf{U}|\mathsf{W}\mathsf{X}\mathsf{Y}}(\mathsf{u'} \mid \mathsf{w}, \mathsf{x}, \mathsf{y}) = 1$, ensuring that the normalization condition is satisfied.

Using the above definitions, we see that
\begin{equation}
    P^{(f)}_{\mathsf{U}\mathsf{W}|\mathsf{X}\mathsf{Y}}(\mathsf{u}, \mathsf{w} \mid \mathsf{x}, \mathsf{y}) = P^{(f)}_{\mathsf{U}|\mathsf{W}\mathsf{X}\mathsf{Y}}(\mathsf{u} \mid \mathsf{w}, \mathsf{x}, \mathsf{y}) \cdot P^{(f)}_{\mathsf{W}|\mathsf{X}\mathsf{Y}}(\mathsf{w} \mid \mathsf{x}, \mathsf{y}) = \begin{cases}
        f(\mathsf{w}),\; &\text{when } P_{\mathsf{U}|\mathsf{W}\mathsf{X}\mathsf{Y}}(\mathsf{u} \mid \mathsf{w}, \mathsf{x}, \mathsf{y}) = 1; \\
        0,\; &\text{when } P_{\mathsf{U}|\mathsf{W}\mathsf{X}\mathsf{Y}}(\mathsf{u} \mid \mathsf{w}, \mathsf{x}, \mathsf{y}) = 0.
    \end{cases}
\end{equation}
By construction for any pair $
|u \rangle, |w \rangle \in S$, we have $P^{(f)}_{\mathsf{U}\mathsf{W}|\mathsf{X}\mathsf{Y}}(\mathsf{u}, \mathsf{w} \mid \mathsf{x}, \mathsf{y}) = 0$ for measurement bases $\mathsf{x}, \mathsf{y}$ with $\mathsf{u} \in \mathsf{x}$ and $\mathsf{w} \in \mathsf{y}$. 

Furthermore, the bipartite behavior is no-signalling. By construction, the marginals $P^{(f)}_{\mathsf{W}|\mathsf{Y}}(\mathsf{w} \mid \mathsf{y})$ are well-defined.
For the marginals $\{P^{(f)}_{\mathsf{U}|\mathsf{X}}(\mathsf{u} \mid \mathsf{x})\}$, first consider the case when $\mathsf{y} = \mathsf{x}$:
\begin{equation}\label{marg_pro_x}
\begin{split}
\sum_{\mathsf{w}} P^{(f)}_{\mathsf{U}\mathsf{W}|\mathsf{X}\mathsf{Y}}(\mathsf{u}, \mathsf{w} \mid \mathsf{x}, \mathsf{y} = \mathsf{x}) &= \sum_{\mathsf{w} : \mathsf{w} \neq \mathsf{u}} P^{(f)}_{\mathsf{U}\mathsf{W}|\mathsf{X}\mathsf{Y}}(\mathsf{u}, \mathsf{w} \mid \mathsf{x}, \mathsf{x}) + \sum_{\mathsf{w} : \mathsf{w} = \mathsf{u}} P^{(f)}_{\mathsf{U}\mathsf{W}|\mathsf{X}\mathsf{Y}}(\mathsf{u}, \mathsf{w} \mid \mathsf{x}, \mathsf{x}) \\
&= \sum_{\mathsf{w} : \<u|w\> = 0} P^{(f)}_{\mathsf{U}\mathsf{W}|\mathsf{X}\mathsf{Y}}(\mathsf{u}, \mathsf{w} \mid \mathsf{x}, \mathsf{x}) + \sum_{\mathsf{w} : \<u|w\> \neq 0} P^{(f)}_{\mathsf{U}\mathsf{W}|\mathsf{X}\mathsf{Y}}(\mathsf{u}, \mathsf{w} \mid \mathsf{x}, \mathsf{x}) \\
&= f(|u\>).
\end{split}
\end{equation}

Furthermore, when defining the conditional probabilities $P^{(f)}_{\mathsf{U}|\mathsf{W}\mathsf{X}\mathsf{Y}}(\mathsf{u} \mid \mathsf{w}, \mathsf{x}, \mathsf{y})$, we use only the local assignment $f$ and the orthogonality relations between vectors in $S$. In other words, we have
\begin{equation}
        P^{(f)}_{\mathsf{U}|\mathsf{W}\mathsf{X}\mathsf{Y}}(\mathsf{u} \mid \mathsf{w}, \mathsf{x}, \mathsf{y}) = P^{(f)}_{\mathsf{U}|\mathsf{W}\mathsf{X}}(\mathsf{u} \mid \mathsf{w}, \mathsf{x})
\end{equation}
So that the marginal probability in Eq.~\eqref{marg_pro_x} holds for any $\mathsf{y} \subset S$. Therefore,
\begin{equation}
\begin{split}
    P^{(f)}_{\mathsf{U}|\mathsf{X}}(\mathsf{u} \mid \mathsf{x}) = f(|u\>), \;\forall \mathsf{u}, \mathsf{x}.
\end{split}
\end{equation}
We have thus constructed a bipartite no-signalling behavior that wins the PT game with marginals corresponding to any valid assignment $f \in \{0, 1/2, 1\}$. 

\end{proof}

\begin{figure}[htbp]
\centering
\includegraphics[width=0.9\textwidth]{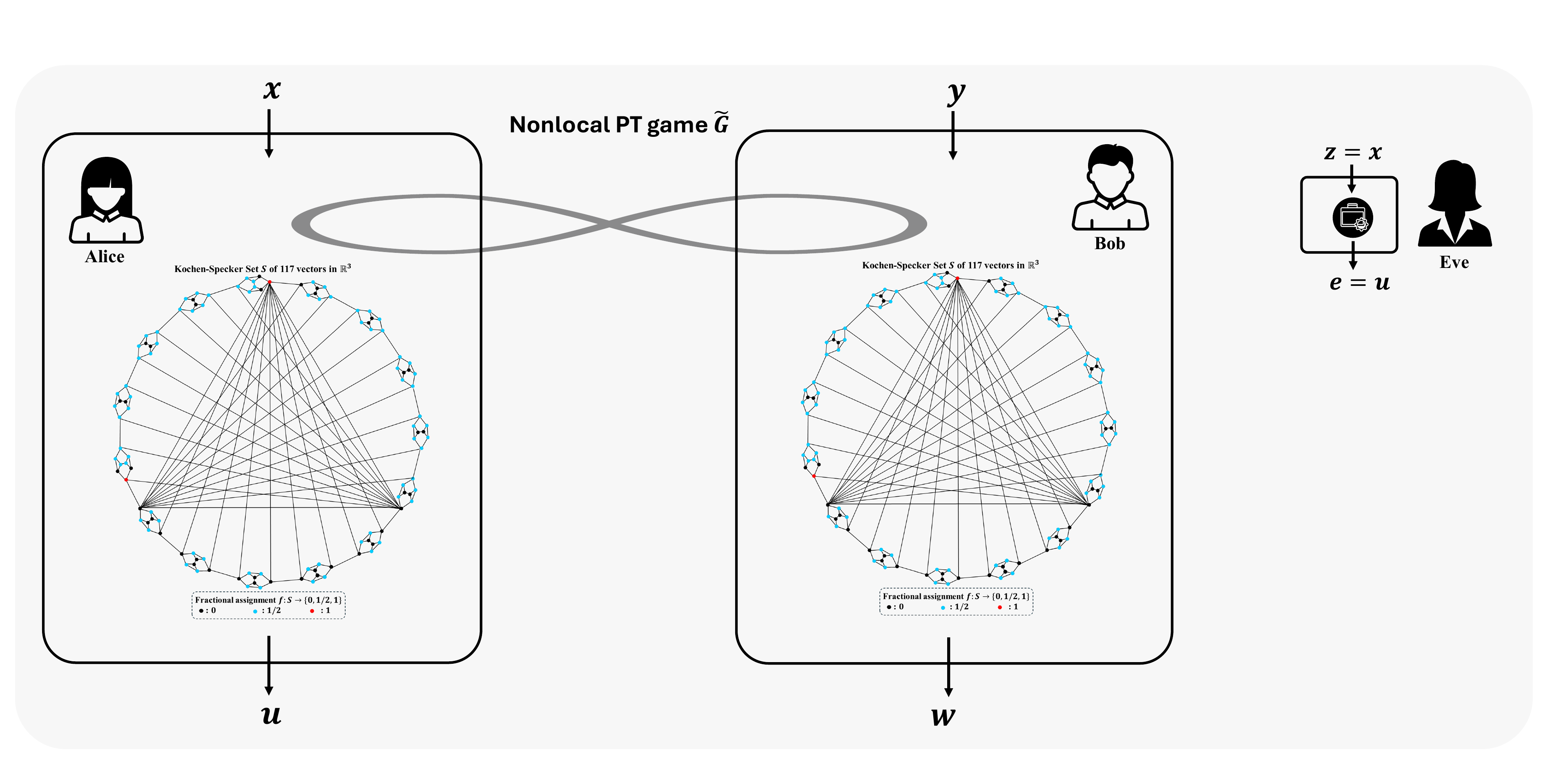}
\label{fig:tripartitedet}
\centering
\caption{The no-signalling strategy is illustrated. The bipartite PT game corresponds to a Kochen-Specker set such as the one illustrated by the orthogonality graph of the projectors. As shown in Lemma \ref{lem:localf-strat} the KS sets admit a $\{0,1/2,1\}$-coloring indicated by the red, black and blue dots. Lemma \ref{lem:bipNS-from-localf} shows how to construct a bipartite Alice-Bob behavior with these marginals. And finally, these bipartite behaviors are used to construct a tripartite no-signalling strategy in Thm. \ref{thm:PT-guessinggame}.  }
\end{figure}

Now, we proceed to construct the tripartite no-signalling strategy $\{P_{\mathsf{U}\mathsf{W}\mathsf{E}|\mathsf{X}\mathsf{Y}\mathsf{Z}}(\mathsf{u}, \mathsf{w}, \mathsf{e}| \mathsf{x}, \mathsf{y}, \mathsf{z}) \}$
that allows Alice and Bob to achieve the maximum value $\omega_{\mathcal{NS}}(G) = 1$ while at the same time, Eve is able to guess Alice's output perfectly i.e., $\mathsf{e}= \mathsf{u}$ when $\mathsf{x}= \mathsf{z}$ following a strategy outlined in \cite{ACP+16}. The construction is illustrated in Fig. \ref{fig:tripartitedet}.

Consider a fixed but arbitrary value of Alice's input $\mathsf{x}$  whose outcomes Eve wants to guess  and set $\mathsf{z} = \mathsf{x}$. We construct the distribution $P_{\mathsf{U}\mathsf{W}\mathsf{E}|\mathsf{X}\mathsf{Y}\mathsf{Z}}(\mathsf{u}, \mathsf{w}, \mathsf{e}| \mathsf{x}, \mathsf{y}, \mathsf{z} = \mathsf{x})$ as follows: $P_{\mathsf{U}\mathsf{W}\mathsf{E}|\mathsf{X}\mathsf{Y}\mathsf{Z}}(\mathsf{u}, \mathsf{w}, \mathsf{e}| \mathsf{x}, \mathsf{y}, \mathsf{z} = \mathsf{x}) = P_{\mathsf{E}|\mathsf{Z}}(\mathsf{e}|\mathsf{z} = \mathsf{x}) \cdot P_{\mathsf{U}\mathsf{W}|\mathsf{X}\mathsf{Y}\mathsf{Z}\mathsf{E}}(\mathsf{u}, \mathsf{w}| \mathsf{x}, \mathsf{y}, \mathsf{z} = \mathsf{x}, \mathsf{e})$ with the distribution $P_{\mathsf{E}|\mathsf{Z}}(\mathsf{e}|\mathsf{z} = \mathsf{x}) = \frac{1}{d}$ for all $\mathsf{e} \in \mathsf{x}$ and the conditional distribution $P_{\mathsf{U}\mathsf{W}|\mathsf{X}\mathsf{Y}\mathsf{Z}\mathsf{E}}(\mathsf{u}, \mathsf{w}| \mathsf{x}, \mathsf{y}, \mathsf{z} = \mathsf{x}, \mathsf{e}) = P^{(f_{e})}_{\mathsf{U}\mathsf{W}|\mathsf{X}\mathsf{Y}}$ from Lemma \ref{lem:bipNS-from-localf}. Here, for any $\mathsf{e}\in\mathsf{x}$,  $f_{e} : S \rightarrow \{0,\frac{1}{2},1\}$ is valid assignment, satisfying the normalization constraints and $f_e(|e\>)=1$.
Evidently when $\mathsf{u} = \mathsf{e} \in \mathsf{x}$ we have that $f_e(|u\>) = 1$ so that we obtain $P_{\mathsf{U}\mathsf{W}\mathsf{E}|\mathsf{X}\mathsf{Y}\mathsf{Z}}(\mathsf{u} = \mathsf{e}, \mathsf{w}, \mathsf{e}| \mathsf{x}, \mathsf{y}, \mathsf{z} = \mathsf{x}) = 1$ or in other words, Eve's output precisely equals Alice's output when Eve chooses her input as $\mathsf{z} = \mathsf{x}$. For the same $\mathsf{x}$ and $\mathsf{z} \neq \mathsf{x}$, Eve is not required to guess Alice's outputs, so one can simply consider the product behavior with the same marginals $P_{\mathsf{U}\mathsf{W}|\mathsf{X}\mathsf{Y}}$ and uniform $P_{\mathsf{E}|\mathsf{Z}}$. 

Now the $|\mathsf{X}|$ different behaviors defined by each $\mathsf{x} \in \mathsf{X}$ can be combined to give the required tripartite behavior $P_{\mathsf{U}\mathsf{W}\mathsf{E}|\mathsf{X}\mathsf{Y}\mathsf{Z}}$. By construction, this tripartite behavior is no-signalling and thus represents a valid attack by a no-signalling eavesdropper. Given knowledge of Alice's input $\mathsf{x}$, Eve can a posteriori choose her input to the device as $\mathsf{z} = \mathsf{x}$ and output her guess as the corresponding outcome $\mathsf{e}$ which by construction allows to perfectly guess Alice's output $\mathsf{u} = \mathsf{e}$. 

\end{proof}

\section{Min-Entropy against Quantum and No-Signalling Adversaries for the Chain Inequality}
\label{sec:ChainIneq-GuessProb}

In this section, we establish a tight analytical bound on the guessing probability as a function of the violation of the chained inequality $I_3$ against quantum adversaries to compare it with the no-signalling case considered earlier. We note that while the (smooth) conditional von Neumann entropy is the quantity considered in proofs of security against quantum adversaries, this entropy can also be bounded from below by the min-entropy. 


We consider the guessing probability $P_g$ defined as the maximal probability with which a quantum adversary, who possesses complete access to the underlying quantum system, can correctly predict the outcome of a given measurement setting of the device. In this scenario, the adversary performs a measurement $E$ on her system and obtains an outcome $e$ to guess Alice's measurement outcome $a$ corresponding to her setting $A_0$. Then, the guessing probability is expressed as
\begin{equation} \label{pg_def}
P_g(A_0|E):=\sum_a P_{AE}(a,e=a|A_0,E).
\end{equation}
and maximized under the constraint that Alice and Bob observe a particular specified value for the chain inequality $I_3$ with $m=3$ inputs per player. As noted earlier, the chain inequality with $m=3$ has been found to achieve even higher key rates (numerically) \cite{MPA11} than the CHSH inequality against quantum adversaries, although analytical proofs of this statement have not been established thus far. As such, our analytical result in this Appendix is of interest on its own right for security against quantum adversaries. To achieve the result, we first introduce a candidate quantum guessing strategy for Eve, providing a concrete lower bound on Eq.\eqref{pg_def}. Subsequently, by demonstrating that this lower bound precisely matches the upper bound derived via the Navascués-Pironio-Acín (NPA) hierarchy \cite{NPA07, NPA08, PNA10} we establish the optimality of our candidate strategy. Consequently, we arrive at a tight analytical characterization of the guessing probability as a function of the violation of the chained inequality $I_3$.

Let $A_0,A_1,A_2$ denote three binary-outcome measurements on Alice’s subsystem, and let $B_0,B_1,B_2$ denote three binary-outcome measurements on Bob’s subsystem.  
The three-setting chained Bell expression $I_3$ is defined as
\begin{equation}\label{I3_def}
    I_3:=\<A_0B_0\>+\<A_1B_0\>+\<A_1B_1\>+\<A_2B_1\>+\<A_2B_2\>-\<A_0B_2\>,
\end{equation}
where $\<O\>:=\Tr(O\rho_{AB})$ is the expectation value of $O$ with respect to the state $\rho_{AB}$ shared by Alice and Bob. Within the DI picture we attribute the choice of the joint state $\rho_{AB}$ together with all local observables $A_i$ and $B_j$ to the adversary Eve. Our candidate guessing strategy for Eve is as follows. Eve prepares the pure three-qubit state $|\Psi\>_{ABE}$ among the devices and her own system that
\begin{equation}
    |\Psi\>_{ABE}=\frac{1}{\sqrt{2}}\left(|0\>_A|0\>_{B}|e_0\>_{E}+|1\>_A|1\>_{B}|e_1\>_{E}\right)
\end{equation}
where $|e_0\>,|e_1\>$ are normalized qubit states with overlap $\<e_0|e_1\>=\sin(\frac{\theta}{2})$ for parameter $\theta\in[0,\pi)$.

The local measurement operators are prepared as
\begin{equation}
    \begin{array}{ll}
        A_0=\sigma_z,\qquad & \; \; \; \; B_0=-\cos\varphi_b \sigma_z +\sin\varphi_b \sigma_x,\\
        A_1=-\cos\varphi_a \sigma_z +\sin\varphi_a \sigma_x,\qquad  & \; \; \; \; B_1=\sigma_x,\\
         A_2=\cos\varphi_a \sigma_z +\sin\varphi_a \sigma_x,\qquad &\; \; \; \; B_2=\cos\varphi_b \sigma_z +\sin\varphi_b \sigma_x,\\
    \end{array}
\end{equation}
with $\sigma_z$ and $\sigma_x$ the Pauli operators.  
The angles $\varphi_a,\varphi_b\in\bigl(\tfrac{\pi}{2},\pi\bigr)$ are related to $\theta$ through
\begin{equation}\label{angle_relation}
    \begin{split}
        \cos\varphi_a&=\frac{1-\cos\theta-\sqrt{3+\cos^2\theta}}{1+\cos\theta},\\
        \sin\varphi_b&=\frac{2-\sqrt{3+\cos^2\theta}}{1+\cos\theta}.
    \end{split}
\end{equation}
The reduced state of Alice and Bob is
\begin{equation}
    \rho_{AB}=\frac{\mathbb{I}\otimes \mathbb{I}}{4}+\frac{\sigma_z\otimes \sigma_z}{4}+\sin{\left(\frac{\theta}{2} \right)}\frac{\sigma_x\otimes \sigma_x}{4}-\sin{\left(\frac{\theta}{2}\right)}\frac{\sigma_y\otimes \sigma_y}{4}.
\end{equation}
A direct calculation then yields the quantum value $w_{\mathcal{Q}}$ of the chained expression $I_3$ \eqref{I3_def} achieved by this strategy as: 
\begin{equation}\label{wq}
\begin{split}
    w_{\mathcal{Q}}&=\Tr\left[\rho_{AB} \left(A_0B_0+A_1B_0+A_1B_1+A_2B_1+A_2B_2-A_0B_2\right)\right]\\
    &= \Tr [\rho_{AB} A_0(B_0-B2)]+\Tr [\rho_{AB} (A_1+A_2)B_1]+\Tr [\rho_{AB}(A_1B_0+A_2B_2)]\\
    &=-2\cos\varphi_b \Tr [\rho_{AB} \sigma_z\otimes \sigma_z]+2\sin\varphi_b \Tr [\rho_{AB}\sigma_x\otimes\sigma_x]+\Tr [\rho_{AB} (2\cos\varphi_a \cos\varphi_b \sigma_z\otimes \sigma_z+ 2\sin\varphi_a\sin\varphi_b\sigma_x\otimes \sigma_x] \\
    &=2\cos\varphi_b(\cos\varphi_a-1)+2\sin\varphi_a(\sin\varphi_b+1)\sin(\frac{\theta}{2})\\
    &=\frac{\sqrt{2\left(-3+\cos\theta+2\sqrt{3+\cos^2\theta}\right)}\left(3-\cos\theta+\sqrt{3+\cos^2\theta}\right)}{1+\cos\theta}
\end{split}
\end{equation}
where, to establish the last line, we have repeatedly used the following useful relations between $\varphi_a,\varphi_b$ obtained from Eq.~\eqref{angle_relation}:
\begin{equation}\label{relation}
\begin{split}
    &\sin\varphi_b-\cos{\varphi_a}=1;\\
    &2\cos{\varphi_a}+1+\cos^2{\varphi_b} \frac{1+\cos\theta}{2}=0;\\
    & 2\sin{\varphi_b}-1+\sin^2{\varphi_a} \frac{1+\cos\theta}{1-\cos\theta}=0;\\
    & \sin\varphi_a=-\cos \varphi_b\sin(\frac{\theta}{2}).
\end{split}
\end{equation}

On the other hand, after Alice measures $A_0$, the correlation between her classical outcome and Eve’s system is described by the classical–quantum state:
\begin{equation}
    \frac{1}{2}[0]_A\otimes |e_0\>\<e_0|+\frac{1}{2}[1]_A\otimes |e_1\>\<e_1|
\end{equation}
where $[0]$ and $[1]$ label the two possible outcomes of $A_0$. 
Consequently, Eve’s optimal guessing strategy in this circumstance is to perform the measurement that can optimally ``unambiguous discriminate'' the quantum state $|e_0\>,|e_1\>$ with the uniform prior probability. According to the seminal result by Helstrom~\cite{Helstrom69}, the minimum error probability for states discrimination (for $|e_0\>,|e_1\>$ with uniform prior probability) is achieved by the two-outcome POVM $\{E_0,\,E_1=\mathbb{I}_2-E_0\}$, where $E_0$ is the projector on the positive eigenspace of the operator $\frac{1}{2}|e_0\>\<e_0|-\frac{1}{2}|e_1\>\<e_1|$. Furthermore, the resulting error probability is
\begin{equation}
    P_{err}=\frac{1}{2}\left(1-\sqrt{1-
    |\<e_0|e_1\>|^2}\right).
\end{equation}
Hence the optimal guessing probability reads
\begin{equation}\label{pg}
    P_g(A_0|E)=1-P_{err}=\frac{1}{2}\left(1+\sqrt{1-
    |\<e_0|e_1\>|^2}\right)=\frac{1}{2}\left(1+\cos(\frac{\theta}{2})\right).
\end{equation}

This Eq.~\eqref{pg}, combining with the quantum violation $w_{\mathcal{Q}}$ of $I_3$ in Eq.~\eqref{wq} yields an explicit relation between the guessing probability obtained from this candidate guessing strategy and the value of the chained Bell expression $I_3$.  Eliminating the parameter $\theta$ gives $P_g(A_0|E)$ as a function of the quantum violation $w_{\mathcal{Q}}$:
\begin{equation}\label{pg_ana}
\begin{split}
    P_g(A_0|E)&=\frac{1}{2}\left(1+\sqrt{\frac{1+f(w_{\mathcal{Q}})}{2}}\right)\\
\end{split}
\end{equation}
where $f(w_{\mathcal{Q}})$ is the real root $x$ of the cubic equation:
\begin{equation}
    8w_{\mathcal{Q}}^2 x^3 + (w_{\mathcal{Q}}^4-432)x^2+(2w_{\mathcal{Q}}^4-72 w_{\mathcal{Q}}^2+864)x+(w_{\mathcal{Q}}^4-432)=0.
\end{equation}
More explicitly we have:
\begin{equation}
\begin{split}
  f(w_{\mathcal{Q}})&=(-\frac{q}{2}+\sqrt{\Delta})^{1/3}+(-\frac{q}{2}-\sqrt{\Delta})^{1/3}-\frac{b}{3a},\\
 \text{where}  \quad a&=8w_Q^{2},
    b=w_{\mathcal{Q}}^{4}-432,
    c=2w_{\mathcal{Q}}^{4}-72 w_{\mathcal{Q}}^{2}+864,
    d=w_{\mathcal{Q}}^{4}-432,\\
     \quad p&=\frac{3ac-b^{2}}{3a^{2}},
    q=\frac{2b^{3}-9abc+27a^{2}d}{27a^{3}},
    \Delta=\left(\frac{q}{2}\right)^{2}+\left(\frac{p}{3}\right)^{3}.
\end{split}
\end{equation}

Finally, note that in the above analysis, our parameterization excludes $\theta=\pi$.  
At $\theta=\pi$ we have $|e_0\rangle=|e_1\rangle$ and the global state factorizes as $|\Psi\rangle_{ABE}=|\Psi^{+}\rangle_{AB}|e_0\rangle_{E}$. In this extremal case Alice and Bob share a pure maximally entangled state that is decoupled with Eve. In this case, it's known that the maximum quantum value of $3\sqrt{3}$ for the chained Bell expression $I_3$ is achieved, while Eve's optimal guessing probability is $\frac{1}{2}$.

A numerical upper bound on the guessing probability can be obtained based on the Navascués-Pironio-Acín (NPA) hierarchy~\cite{NPA07, NPA08, PNA10}, which is a sequence of outer approximations to the quantum correlation set.  Let $\tilde{\mathcal{Q}}_{k}$ denote the level-$k$ relaxation, increasing $k$ yields progressively tighter approximations to the true quantum correlation set $\mathcal{Q}$.  For a fixed observed violation $w_{\mathcal{Q}}$ of the chained inequality $I_{3}$, the corresponding upper bound on the guessing probability is obtained by solving the semidefinite programme
\begin{equation}\label{sdp_bound}
  \begin{split}
    \max_{\{p(a,b,e|x,y,E)\}} \quad & P_g(A_0|E)\\
     s.t.\qquad & \sum_{a,b,x,y} c_{a,b,x,y} p(a,b|x,y) \geq w_{\mathcal{Q}}\\
           & \{p(a,b,e|x,y,E)\} \in\tilde{\mathcal{Q}}_k\\
  \end{split}
\end{equation}
where $\sum_{a,b,x,y} c_{a,b,x,y}\,p(a,b|x,y)$ denotes $I_{3}$ written out in terms of probabilities.

Figure~\ref{fig_pg} displays the analytical curve of Eq.~\eqref{pg_ana}, obtained from our candidate strategy, and also displays the SDP upper bound derived from Eq.~\eqref{sdp_bound} at level $k=2$.  As evident from their comparison in Figure~\ref{fig_pg}, the analytical curve of Eq.~\eqref{pg_ana} is tight, confirming that the candidate strategy is optimal.  

\begin{figure}[htbp]
\centering
\subfigure[]{
\begin{minipage}[t]{0.48\linewidth}
\centering
\includegraphics[width=\textwidth]{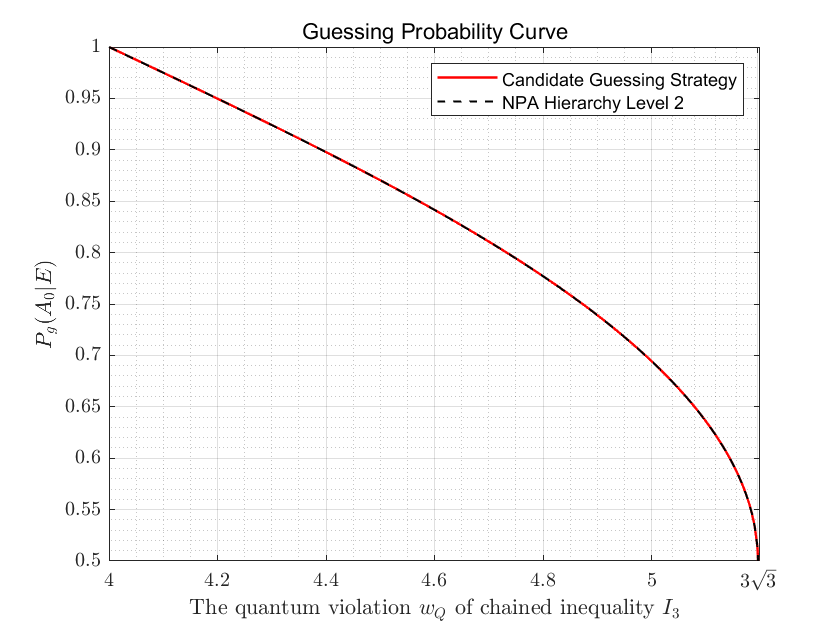}
\label{fig_pg}
\end{minipage}%
}%
\subfigure[]{
\begin{minipage}[t]{0.48\linewidth}
\centering
\includegraphics[width=\textwidth]{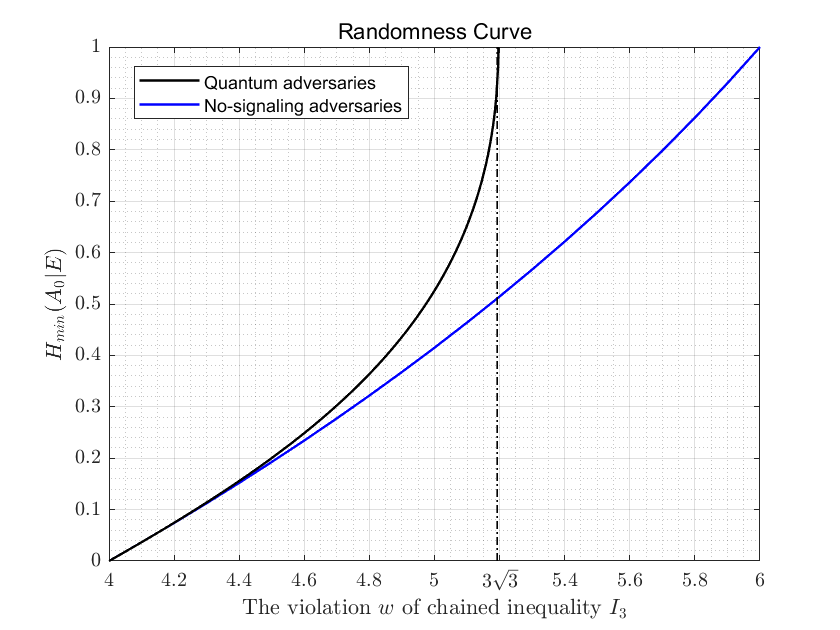}
\label{fig_hmin}
\end{minipage}%
}
\centering
\caption{(a) The guessing probability $P_g(A_0|E)$ versus the quantum violation $w_Q$ of chained inequality $I_3$. (b) The min-entropy $H_{min}(A_0|E)$ as a function of the value $w$ of the chained Bell expression $I_3$ is plotted against both quantum (black, \eqref{pg_ana}) and no-signaling (blue, \eqref{eq:pg_chainns}) adversaries.}
\end{figure}

The guessing probability against a no-signaling adversary can be computed via a linear program and proven through strong duality. The guessing probability against a no-signaling adversary is computed as a function of the value $w_{\mathcal{NS}} \in [4,6]$ (where $4$ is the maximum classical value and $6$ is the maximum no-signaling value) of the chained Bell expression $I_3$ in \eqref{I3_def}  to be 
\begin{equation}
\label{eq:pg_chainns}
    P_g(A_0|E)=-\frac{w_{\mathcal{NS}}}{4}+2.
\end{equation}

Finally, the comparison of the min-entropy $H_{min}$ against quantum and no-signaling adversaries, as a function of the violation $w$ of the chained inequality $I_3$, is plotted in Fig.~\ref{fig_hmin}. The toolkit Moment~\cite{GA24}, the modelers YALMIP~\cite{Lofberg04} and CVX~\cite{CVX, GB08}, and the solver MOSEK~\cite{Mosek} are used for SDP and LP calculation. A generalisation of the guessing probability expressions \eqref{pg_ana} and \eqref{eq:pg_chainns} for general $m \geq 3$ inputs per player and the accumulation of entropy \cite{AFDFRV18, DFR20} for the chained Bell expression in experimental DI randomness certification against quantum and no-signalling adversaries is pursued in forthcoming work.


\end{document}